\newtheorem{theorem}{Theorem}
\newcommand{\argmax}{\operatornamewithlimits{argmax}} 
\def\e{{\rm e}}   
\def\diag{\text{\rm diag}}
\def\tr{\text{\rm tr}}
\def\cov{\text{\rm cov}}
\def\vec{\text{vec}}
\def\vech{\text{vech}}
\def\IW{\text{\rm IW}}
\def\hnabla{\widehat{\nabla}}
\def\ESS{\text{\rm ESS}}
\def\gwd{\text{\rm gwd}}
\def\gwesp{\text{\rm gwesp}}
\newcommand\ML{\text{ML}}
\newcommand\PL{\text{PL}}
\newcommand\BF{\text{BF}}
\def\vijr{{\hat{v}_{ij}^{(r)}}}
\def\mijr{{\hat{m}_{ij}^{(r)}}}
\def\L{{\mathcal{L}}}
\def\M{{\mathcal{M}}}
\def\N{{\mathcal{N}}}
\def\D{{\mathcal{D}}}
\def\V{{\mathcal{V}}}
\def\H{{\mathcal{H}}}
\title{Bayesian variational inference for \\[-4mm]
exponential random graph models}
\author[1]{Linda S. L. Tan}
\author[2]{Nial Friel}
\affil[1]{National University of Singapore}
\affil[2]{University College Dublin}
\date{}
\begin{document}
\maketitle
\vspace{-10mm}
\begin{abstract}
Deriving Bayesian inference for exponential random graph models (ERGMs) is a challenging ``doubly intractable" problem as the normalizing constants of the likelihood and posterior density are both intractable. Markov chain Monte Carlo (MCMC) methods which yield Bayesian inference for ERGMs, such as the exchange algorithm, are asymptotically exact but computationally intensive, as a network has to be drawn from the likelihood at every step using, for instance, a ``tie no tie" sampler. In this article, we develop a variety of variational methods for Gaussian approximation of the posterior density and model selection. These include nonconjugate variational message passing based on an adjusted pseudolikelihood and stochastic variational inference. To overcome the computational hurdle of drawing a network from the likelihood at each iteration, we propose stochastic gradient ascent with biased but consistent gradient estimates computed using adaptive self-normalized importance sampling. These methods provide attractive fast alternatives to MCMC for posterior approximation. We illustrate the variational methods using real networks and compare their accuracy with results obtained via MCMC and Laplace approximation.

\noindent \textbf{Keywords:} exponential random graph model; nonconjugate variational message passing; stochastic variational inference; adjusted pseudolikelihood; adaptive self-normalized importance sampling; importance weighted lower bound.
\end{abstract}

\section{Introduction} \label{sec:Intro}
Exponential random graph models (ERGMs) are widely used in economics, sociology, political science and public health to analyze networks. Fitting ERGMs using maximum likelihood estimation (MLE) is challenging as the normalizing constant of the likelihood involves a sum over all possible networks, which is intractable except for very small networks. Bayesian inference for ERGMs is even more challenging as the normalizing constant of the posterior density is also intractable, leading to a {\em doubly intractable} problem. Recently, a number of Markov chain Monte Carlo (MCMC) methods have been developed to address this problem. These include auxiliary variable approaches \citep{Moller2006, Murray2006} such as the double Metropolis-Hastings sampler \citep{Liang2010} and adaptive exchange algorithm \citep{Liang2016}, and likelihood approximation methods \citep{Atchade2013} such as the Russian roulette algorithm \citep{Lyne2015}. \cite{Park2018} provide a comprehensive review of these techniques. As MCMC methods are computationally intensive, we propose fast variational methods as alternatives for obtaining Bayesian inference for ERGMs. 

The classic approach for fitting ERGMs is MCMC MLE, which is first described by \cite{Geyer1992} and developed for ERGMs by \cite{Snijders2002}. To overcome the intractability of the log likelihood $\ell(\theta)$ for parameters $\theta$, MCMC MLE maximizes an estimate of $\ell(\theta) - \ell(\theta_0)$, where $\theta_0$ is a fixed value that should ideally be close to the maximum likelihood estimate $\hat{\theta}_\ML$. Success of this method rests crucially on the choice of $\theta_0$, and a poor choice may result in an objective function that cannot be maximized \citep{Caimo2011}.  \cite{Hummel2012} introduce a method to move $\theta_0$ closer to $\hat{\theta}_\ML$ sequentially. An alternative is maximum pseudolikelihood estimation \citep[MPLE,][]{Besag1974}, where the likelihood is approximated by a product of full conditional distributions, assuming that the dyads are conditionally independent given the rest of the network. While MPLE is fast, it can result in unreliable inference. 

To derive Bayesian inference for ERGMs, \cite{Caimo2011} propose an MCMC algorithm that samples from the likelihood using a ``tie no tie" sampler \citep{Hunter2008} and draws posterior samples of $\theta$ using the exchange algorithm \citep{Murray2006}. \cite{Caimo2013} extend this algorithm into a reversible jump MCMC algorithm. Sampling from the likelihood is generally a time-consuming procedure, whose convergence is difficult to assess \citep{Bouranis2017}. This creates obstacles in model selection, where many candidate models are to be fitted in a short time. \cite{Bouranis2018} propose an affine transformation of $\theta$ for correcting the mode, curvature and magnitude of the pseudolikelihood. When used in MCMC methods, this {\em adjusted pseudolikelihood} yield Bayesian inference for ERGMs at a much lower computational cost as sampling from the likelihood at each iteration is no longer necessary. 

In this article, we develop a variety of variational methods for Gaussian posterior approximation for ERGMs. First, a nonconjugate variational message passing (NCVMP) algorithm \citep{Knowles2011} is developed using the adjusted pseudolikelihood. This algorithm converges rapidly, but the accuracy of the posterior approximation is tied to how well the adjusted pseudolikelihood mimics the true likelihood. We consider an alternative stochastic gradient ascent algorithm \citep{Titsias2014}, which uses a reparametrization trick \citep{Kingma2014, Rezende2014} to transform variables so that the gradients are direct functions of the variational parameters. This method does not rely on the adjusted pseudolikelihood. However, gradient estimation poses a challenge as the likelihood is intractable, and forming an unbiased gradient requires drawing networks from the likelihood. We explore two solutions. The first uses Monte Carlo sampling, where only a very small number of networks are drawn from the likelihood at each iteration. This approach is feasible in stochastic gradient ascent as convergence is ensured with unbiased gradients and appropriate stepsize. 

The second approach alleviates the burden of sampling at each iteration by using biased but consistent gradients computed using self-normalized importance sampling (SNIS). We propose a novel adaptive sampling strategy, where a set $\mathbb{S}$ of particles and sufficient statistics (of networks drawn from the likelihood of these particles) is maintained at any iteration. Given $\theta$, the gradient is first computed using SNIS by using the sufficient statistics of the particle closest to $\theta$ in $\mathbb{S}$. If the SNIS estimate is poor, a set of networks are sampled from the likelihood of $\theta$ and the gradient is estimated using Monte Carlo instead. Subsequently, $\theta$ and its sufficient statistics are added to $\mathbb{S}$. We begin with one particle ($\hat{\theta}_\ML$) and allow $\mathbb{S}$ to grow as the{ algorithm proceeds. This strategy reduces computation for low-dimensional problems, albeit with increased storage. 

Variational methods seek to minimize the Kullback-Leibler (KL) divergence between the true posterior and variational density, which is equivalent to maximizing a lower bound $\mathcal{L}$ on the log marginal likelihood. \cite{Tran2017} extend variational Bayes to intractable likelihood problems by showing that an unbiased gradient of the KL divergence can be obtained by replacing the likelihood with an unbiased estimate. Here, we do not estimate the likelihood explicitly or assume it is easy to simulate from the likelihood. Instead, we apply the reparametrization trick before showing that an unbiased gradient of $\mathcal{L}$ can be obtained by simulating from the likelihood. 

While the log marginal likelihood is useful in model selection, using $\mathcal{L}$ as a substitute may not yield reliable results as the tightness of the bound is unknown. \cite{Burda2016} propose an importance weighted lower bound $\L_V^\IW$, which can be computed by generating $V$ samples from the variational density. $\L_V^\IW$ increases monotonically with $V$ and approaches the log marginal likelihood in the limit. As $\L_V^\IW$ is an asymptotically unbiased estimator of the log marginal likelihood, it can be useful for model selection for sufficiently large $V$. We investigate the accuracy, efficiency and feasibility for model selection of proposed variational methods using real networks.

We begin with a review of ERGMs, methods commonly used for fitting them and the adjusted pseudolikelihood in Section \ref{sec:ERGM}. Section \ref{sec:VB} describes the Bayesian variational approach. NCVMP and stochastic variational inference (SVI) are developed in Sections \ref{sec:NCVMP} and \ref{sec_SVI} respectively. Section \ref{sec:Model_sel} describes how model selection for ERGMs can be performed using variational methods and Section \ref{sec:Appl} presents the experimental results. Section \ref{sec:Conclu} concludes with a discussion.

\section{Exponential random graph models} \label{sec:ERGM}
Let $\N = \{1, \dots, n\}$ and $Y$ denote the $n \times n$ adjacency matrix of a network with $n$ nodes, where $Y_{ij}$ is 1 if there is a link from node $i$ to node $j$ and 0 otherwise. We assume that there are no self-links and hence $Y_{ii} = 0$ for $i \in \N$. If the network is undirected, then $Y$ is symmetric. Let $\mathcal{Y}$ denote the set of all possible networks on $n$ nodes and $y \in \mathcal{Y}$ be an observation of $Y$. In an ERGM, the likelihood of $y$ is 
\begin{equation*}
p(y|\theta) = \frac{\exp\{\theta^T s(y)\}}{z(\theta)}, \quad \text{where} \quad z(\theta) = \sum_{y \in \mathcal{Y}} \exp\{\theta^T s(y)\}
\end{equation*}
is the {\em normalizing constant}, $\theta \in \mathbb{R}^p$ is the vector of parameters and $s(y) \in \mathbb{R}^p$ is the vector of {\em sufficient statistics} for $y$, such as number of edges, number of triangles or nodal attributes. The normalizing constant, and hence the likelihood, cannot be evaluated except for trivially small graphs as it involves a sum over all networks in $\mathcal{Y}$ and the size of $\mathcal{Y}$ increases exponentially with $n$. The total possible number of undirected networks on $n$ nodes is $2^{n \choose 2}$. Let $\D$ denote the set of all dyads, where $\D = \{(i,j)|i, j \in \N, i<j\}$ for undirected networks, and $\D=\{(i,j)|i, j \in \N, i \neq j \}$ for directed networks.

In Bayesian inference, prior information about $\theta$ is captured by placing a prior on $\theta$. We consider $\theta \sim N(\mu_0, \Sigma_0)$ and a vague prior can be specified by setting $\Sigma_0 = \sigma_0^2 I_p$ with a large $\sigma_0^2$. The posterior density is $p(\theta|y) = p(y|\theta) p(\theta)/p(y)$, where $p(y) = \int p(y|\theta) p(\theta) d\theta$ is the marginal likelihood. Finding the posterior is a {\it doubly intractable} problem as normalizing constants in the likelihood and posterior are both intractable.

\subsection{Markov chain Monte Carlo maximum likelihood estimation} \label{sec_MCMLE}
The conventional method for estimating $\theta$ is MCMC MLE. As the likelihood is intractable, it is not maximized directly. Instead, the log ratio of the likelihoods at $\theta$ and some initial estimate $\theta_0$ is maximized. Note that
\begin{equation} \label{ztheta}
\frac{z(\theta)}{z(\theta_0)} 
= \sum_{y \in \mathcal{Y}} \exp\{ (\theta - \theta_0)^T s(y) \}  p(y|\theta_0) 
= E_{y|\theta_0} [ \exp\{s(y)^T  (\theta - \theta_0) \} ].
\end{equation}
Suppose  $\{y_1, \dots, y_K\}$ are networks simulated  from $p(y|\theta_0)$ via MCMC, then
\begin{equation*}
\text{LR}_{\theta_0}(\theta) 
= \log \frac{p(y|\theta)}{p(y|\theta_0)} 
\approx s(y)^T  (\theta - \theta_0) - \log \bigg[ \frac{1}{K} \sum_{k=1}^K \exp\{s(y_k)^T  (\theta - \theta_0) \} \bigg],
\end{equation*}
which can be maximized using Newton-Raphson or stochastic approximation. The value $\hat{\theta}_\ML$ at which $\text{LR}_{\theta_0}(\theta)$ is maximized serves as a maximum likelihood estimate of $\theta$.

To simulate from $p(y|\theta_0)$, \cite{Snijders2002} propose a Metropolis-Hastings algorithm, which begins with a network, $y^{(0)}$ (e.g. observed network), and randomly selects a dyad for toggling at each iteration. In the ``tie no tie" sampler, the dyad is randomly selected with equal probability from the set of dyads with ties or the set without ties. This reduces the probability of selecting a dyad without a tie, and improves mixing in the MCMC chain, as the proposal to toggle it has a high probability of rejection due to sparsity of most networks. Let $y_{-ij}$ denote the value of all dyads in $\D$ except $(i,j)$. Given $y^{(t-1)}$, the acceptance probability for toggling the value $y_{ij}^{(t-1)}$ of a candidate $(i,j)$ at iteration $t$ is 
\begin{equation*}
\min\left(1, \; \frac{p(y_{ij} \neq y_{ij}^{(t-1)}|y_{-ij} = y_{-ij}^{(t-1)}, \theta_0)}{p(y_{ij} = y_{ij}^{(t-1)}|y_{-ij} = y_{-ij}^{(t-1)}, \theta_0)} \right).
\end{equation*} 
The Metropolis-Hastings algorithm produces a sequence of networks $\{y^{(0)}, \dots, y^{(T)}\}$. The first portion is highly dependent on the initial network and is usually discarded as burn-in. These are the {\it auxiliary iterations} required before a simulation can be obtained from $p(y|\theta_0)$. A high thinning factor is often imposed to reduce correlation among retained samples. Simulating from $p(y|\theta_0)$ is thus computationally intensive.

\subsection{Pseudolikelihood}
\cite{Strauss1990} approximate the intractable likelihood of an ERGM with a {\em pseudolikelihood}, which assumes that the dyads are conditionally independent given the rest of the network. The pseudolikelihood is
\begin{equation*}
\begin{aligned}
f_\PL(y|\theta) &= \prod_{(i,j) \in \D} p(y_{ij}|y_{-ij}, \theta) = \prod_{(i,j) \in \D} \frac{ p(y_{ij} = 1|y_{-ij}, \theta)^{y_{ij}}}{p(y_{ij} = 0|y _{-ij}, \theta)^{y_{ij} - 1} }.
\end{aligned}
\end{equation*}
Now
\begin{equation} \label{logit}
\text{logit} \{ p(y_{ij} = 1|y_{-ij}, \theta) \} 
= \log \frac{p(y_{ij} = 1|y_{-ij}, \theta)}{p(y_{ij} = 0|y_{-ij}, \theta)} \\
= \theta^T \delta_s(y)_{ij},
\end{equation}
where $\delta_s(y)_{ij} = s(y_{ij}^+) - s(y_{ij}^-)$ is the vector of {\it change statistics} associated with $(i,j)$ and it represents the change in sufficient statistics when $y_{ij}$ is toggled from 0 ($y_{ij}^-$) to 1 ($y_{ij}^+$), with the rest of the network unchanged. From \eqref{logit},
\begin{equation*}
\begin{aligned}
\log f_\PL (y|\theta) &= \sum_{(i,j) \in \D} [ y_{ij} \theta^T \delta_s(y)_{ij} - \log \{ 1 + \exp(\theta^T \delta_s(y)_{ij}) \} ].
\end{aligned}
\end{equation*}
Maximization of the pseudolikelihood in place of the true likelihood can be performed efficiently using logistic regression where $\{y_{ij}\}$ are taken as responses and $\{\delta_s(y)_{ij}\}$ as predictors. However, this approach relies on the strong and often unrealistic assumption of conditionally independent dyads. Properties of the pseudolikelihood are not well understood \citep{vanDujin2009} and its use may also lead to biased estimates.

\subsection{Adjusted pseudolikelihood} \label{sec_APL}
\cite{Bouranis2018} propose an {\it adjusted pseudolikelihood} for correcting the mode, curvature and magnitude of the pseudolikelihood, which is given by 
\begin{equation*}
\begin{aligned}
\tilde{f}(y|\theta) &= M f_\PL(y|g(\theta)).
\end{aligned}
\end{equation*}
Th constant $M > 0$ adjusts the magnitude and $g: \mathbbm{R}^p \rightarrow \mathbbm{R}^p$ is an invertible affine transformation that adjusts the mode and curvature of the pseudolikelihood to match the true likelihood. It is defined as 
\begin{equation} \label{g}
g(\theta) = \hat{\theta}_{\PL} + W(\theta - \hat{\theta}_{\ML}),
\end{equation}
where $\hat{\theta}_{\ML} = \argmax_\theta p(y|\theta)$ is the maximum likelihood estimate, $\hat{\theta}_{\PL} = \argmax_\theta f_{\PL}(y|\theta)$ is the maximum pseudolikelihood estimate and $W$ is a $p \times p$ upper triangular matrix. As 
\begin{equation*}
\argmax_\theta \tilde{f}(y|\theta) = \argmax_\theta f_\PL(y|g(\theta)) = g^{-1} ( \hat{\theta}_{\PL}) = \hat{\theta}_{\ML},
\end{equation*}
the adjusted pseudolikelihood has the same mode as the true likelihood. 

The matrix $W$ is selected so that $\log \tilde{f}(y|\theta)$ has the same curvature as the true log likelihood at the mode. The Hessian of $\log \tilde{f}(y|\theta)$ is $W^T \nabla_\theta^2 \log f_\PL(y|g(\theta)) W$ and the Hessian of $\log p(y|\theta)$ is $-\cov_{y|\theta}[s(y)]$, where $\cov_{y|\theta}[s(y)]$ is the covariance matrix of $s(y)$ with respect to $p(y|\theta)$. Details are given in the supplementary material. As $g( \hat{\theta}_{\ML} ) = \hat{\theta}_\PL$, 
\begin{equation*}
W^T \nabla_\theta^2 \log f_\PL(y|\hat{\theta}_{\PL}) W = -\cov_{y|\hat{\theta}_{\ML}}[s(y)].
\end{equation*}
Let $R_1^TR_1$ and $R_2^TR_2$ be unique Cholesky decompositions (with positive diagonal entries) of $- \nabla_\theta^2 \log f_\PL(y|\hat{\theta}_{\PL})$ and $\cov_{y|\hat{\theta}_{\ML}} [s(y)]$ respectively, where $R_1$ and $R_2$ are $p \times p$ upper triangular matrices. Then $W^T R_1^TR_1 W = R_2^TR_2$. By uniqueness, $W = R_1^{-1} R_2$. We can estimate $\cov_{y|\hat{\theta}_{\ML}}[s(y)]$ using Monte Carlo by simulating from $p(y|\hat{\theta}_{\ML})$.

Finally, $f_\PL(y|g(\theta))$ is scaled by $M$ to have the same magnitude as the true likelihood at the mode. This implies that
\begin{equation*}
M = \frac{p(y|\hat{\theta}_{\ML})}{f_\PL(y|g(\hat{\theta}_{\ML})) }= \frac{\exp(\hat{\theta}_{\ML}^T s(y))/z(\hat{\theta}_{\ML})}{f_\PL(y|\hat{\theta}_{\PL}) }.
\end{equation*}
\cite{Bouranis2018} propose an importance sampling procedure to estimate $z(\hat{\theta}_{\ML})$. Introducing a sequence of temperatures $0= t_0 < t_1 < \dots < t_J=1$,
\begin{equation*}
z(\hat{\theta}_{\ML}) = z(0)\prod_{j=1}^{J} \frac{z(t_{j}\hat{\theta}_{\ML})}{z(t_{j-1}\hat{\theta}_{\ML})},
\end{equation*}
where $z(0) = 2^{n\choose 2}$ for undirected networks. Each ratio is then estimated using importance sampling. From \eqref{ztheta},
\begin{equation*}
\begin{aligned}
\frac{z(t_{j}\hat{\theta}_{\ML})}{z(t_{j-1}\hat{\theta}_{\ML})} \approx \frac{1}{K}\sum_{k=1}^K \exp\{ (t_{j} - t_{j-1}) \hat{\theta}_{\ML}^T s(y_k^{(j-1)}) \},
\end{aligned}
\end{equation*}
where $\{y_1^{(j-1)}, \dots, y_K^{(j-1)}\}$ are samples from $p(y|t_{j-1}\hat{\theta}_{\ML})$. Similar estimators can also be obtained using annealed importance sampling \citep{Neal2001}.

The above procedure hinges on $z(0)$ being known and slowly shifts this value towards $z(\hat{\theta}_\ML)$. While the procedure works well for small networks, it is hard to implement for large networks as sampling from $p(y|t_{j-1}\hat{\theta}_{\ML})$ for a small $t_{j-1}$ is difficult for large $n$. For instance, when $j=1$, we need to draw uniformly from the set of all possible networks. The inclusion probability of each edge is 0.5 and the average network size is $0.5n(n-1)$, which is large for large $n$. If the Metropolis-Hastings algorithm in Section \ref{sec_MCMLE}, which initializes with the observed sparse network, is used for simulation, it will take a large number of samples for a simulated network to reach the average size. Biased estimates may result if the burn-in is not long enough. We propose a modification which can be applied if the first sufficient statistic is {\em number of edges}, $\sum_{(i,j) \in \D} y_{ij}$. Let $\hat{\theta}_{\ML, 1}$ be the first element of $\hat{\theta}_{\ML}$, and $\hat{\theta}_{\ML, -1}$ denote $\hat{\theta}_{\ML}$  with the first element removed. The idea is to fix the first element at $\hat{\theta}_{\ML, 1}$ and let the remaining elements approach $\hat{\theta}_{\ML, -1}$ starting from zero. As most observed networks are sparse, $\hat{\theta}_{\ML,1}$ is often small and helps to control the size of simulated networks. We have
\begin{equation*}
z(\hat{\theta}_{\ML}) = z([\hat{\theta}_{\ML, 1}, 0])\prod_{j=1}^{J} \frac{z([\hat{\theta}_{\ML, 1}, t_{j}\hat{\theta}_{\ML, -1}])}{z([\hat{\theta}_{\ML, 1}, t_{j-1}\hat{\theta}_{\ML, -1}])},
\end{equation*}
where $z([\hat{\theta}_{\ML, 1}, 0]) $ is the normalizing constant of a network where the only sufficient statistic is number of edges. The likelihood of this dyad independent network is $p(y|\hat{\theta}_{\ML, 1}) =\prod_{(i,j) \in \D} \exp(\hat{\theta}_{\ML, 1}y_{ij})/\{ 1 + \exp(\hat{\theta}_{\ML, 1})\}$. For undirected networks,
\begin{equation*}
\log z([\hat{\theta}_{\ML, 1}, 0]) = 0.5 n(n-1) \log \{1+ \exp(\hat{\theta}_{\ML, 1})\}.
\end{equation*}
We can again estimate each ratio using importance sampling by
\begin{equation*}
\begin{aligned}
\frac{z([\hat{\theta}_{\ML, 1}, t_{j}\hat{\theta}_{\ML, -1}])}{z([\hat{\theta}_{\ML, 1}, t_{j-1}\hat{\theta}_{\ML, -1}])} \approx \frac{1}{K}\sum_{k=1}^K \exp\{ (t_{j} - t_{j-1}) \hat{\theta}_{\ML,-1}^T s(y_k^{(j-1)})_{-1} \},
\end{aligned}
\end{equation*}
where $\{y_1^{(j-1)}, \dots, y_K^{(j-1)}\}$ are samples from $p(y|[\hat{\theta}_{\ML, 1}, t_{j-1}\hat{\theta}_{\ML, -1}])$ and $s(y)_{-1}$ denotes the vector of sufficient statistics excluding the first element. 

\cite{Bouranis2018} used the adjusted pseudolikelihood in place of the true likelihood in MCMC algorithms and showed that the Bayes factor for performing model selection can be estimated accurately with reduced computation. Next, we develop variational inference methods for the ERGM, one of which uses this adjusted pseudolikelihood.

\section{Bayesian variational inference}\label{sec:VB}
In Bayesian variational inference, the true posterior of $\theta$ is approximated by a more tractable density, $q_\lambda(\theta)$, with parameters $\lambda$. It is commonly assumed that $q_\lambda(\theta)$ belongs to a parametric family or is of a factorized form, say $q_\lambda(\theta)= \prod_{i=1}^p q_i(\theta_i)$. The KL divergence between the variational density and true posterior,
\begin{equation*}
\text{KL}[q_\lambda(\theta)||p(\theta|y)] = \int q_\lambda(\theta) \log \frac{q_\lambda(\theta)}{p(\theta|y)} d\theta,
\end{equation*}
is then minimized subject to these restrictions. We consider a Gaussian approximation $N(\mu, \Sigma)$ of the posterior density, where $\lambda$ denotes the parameters $\{\mu, \Sigma\}$. This assumption allows posterior correlation among elements of $\theta$ to be captured and $q_\lambda(\theta)$ is likely to approximate the true posterior well so long as the Gaussian assumption is not strongly violated. Posterior estimation is thus reduced to an optimization problem of finding $\lambda$ that minimizes the KL divergence. As $\text{KL}[q_\lambda(\theta)||p(\theta|y)] \geq 0$,
\begin{equation} \label{LB}
\log p(y) = \underbrace{\int q_\lambda(\theta) \log \frac{p(\theta,y)}{q_\lambda(\theta)} d\theta}_{\L} \;+ \;\underbrace{\int q_\lambda(\theta)\log \frac{q_\lambda(\theta)} {p(\theta|y)}d\theta}_{\text{KL divergence}} \;\geq \;\L.
\end{equation}
The log marginal likelihood is bounded below by $\L$, the {\em evidence lower bound}. Minimizing the KL divergence is thus equivalent to maximizing $\L$ with respect to $\lambda$. 

For ERGMs, $\L$ is intractable due to the likelihood. We propose two approaches to overcome this problem. The first plugs in the adjusted pseudolikelihood for the true likelihood and intractable expectations are approximated (deterministically) using Gauss-Hermite quadrature \citep{Liu1994}.  As the adjusted pseudolikelihood is nonconjugate with respect to the prior of $\theta$, we optimize $\L$ using nonconjugate variational message passing \citep[NCVMP,][]{Knowles2011}. In the second approach, we consider stochastic variational inference \citep[SVI,][]{Titsias2014}, which does not require expectations to be evaluated analytically. A reparametrization trick is applied and $\L$ is optimized using stochastic gradient ascent. The gradients are estimated using Monte Carlo or self-normalized importance sampling.

\section{Nonconjugate variational message passing} \label{sec:NCVMP}
If $q_\lambda(\theta)= \prod_{i=1}^p q_i(\theta_i)$ and each $q_i$ belongs a exponential family, \cite{Winn2005} showed that, for conjugate-exponential models, optimizing each $q_i$ involves only a local computation at the node $\theta_i$. A term from the parent nodes and one term from each child node of $\theta_i$ are summed, and these terms can be interpreted as ``messages" passed from the neighboring nodes, hence ``variational message passing". \cite{Knowles2011} consider an extension to nonconjugate models by approximating intractable expectations using bounds or quadrature. We assume that $q_\lambda(\theta)$ belongs to an exponential family (Gaussian) but do not consider a factorized form, which may result in underestimation of the posterior variance. Thus the ``messages" passed to $\theta$ consist only of one from the parent nodes $\{\mu_0, \Sigma_0\}$ and one from the child node $y$, as illustrated in Figure \ref{factor_graph}.
 \begin{figure}[htb!]
\centering
\includegraphics[width=0.4\textwidth]{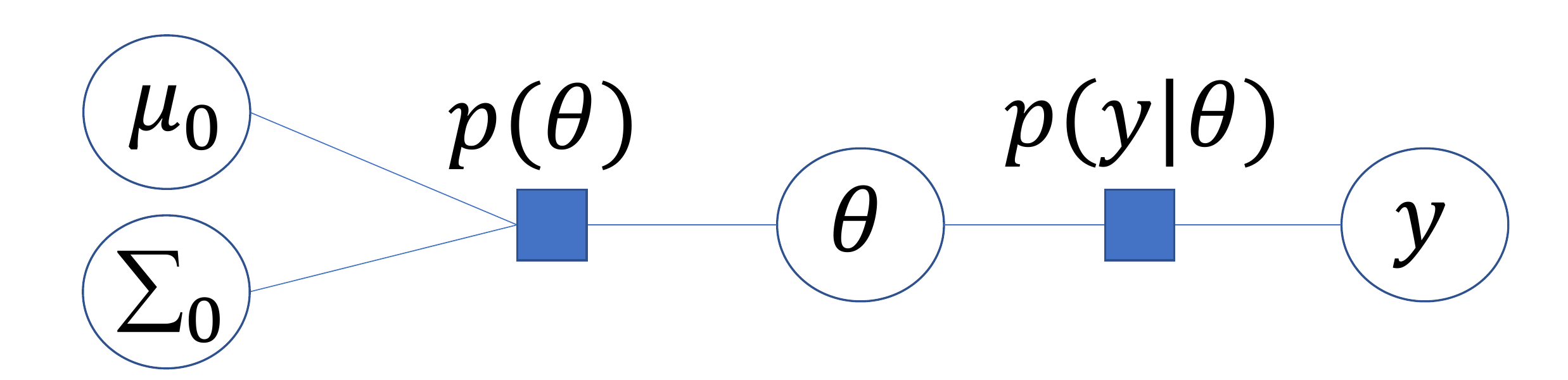}
\caption{Factor graph for ERGM. Filled rectangles denote factors. \label{factor_graph}}
\end{figure}

Suppose $q_\lambda(\theta) = \exp\{ \lambda^T t(\theta) - h(\lambda)\}$, where $\lambda$ is the vector of natural parameters and $t(\cdot)$ are the sufficient statistics. Let $E_{q_\lambda}$ denote expectation with respect to $q_\lambda$. From \eqref{LB},
\begin{equation} \label{euclidean_gradient}
\begin{gathered}
\L = E_{q_\lambda} \{ \log p(\theta,y) \} - \nabla_\lambda h(\lambda)^T \lambda + h(\lambda), \\
\nabla_\lambda \L= \nabla_\lambda E_{q_\lambda} \{ \log p(\theta,y) \} - \V(\lambda) \lambda,
\end{gathered}
\end{equation}
where $E_{q_\lambda}\{ t(\theta) \} = \nabla_\lambda h(\lambda)$ and $\V(\lambda) = \nabla^2_\lambda h(\lambda)$ is the covariance matrix of $t(\theta)$ with respect to $q_\lambda$. To maximize $\L$, we set $\nabla_\lambda \mathcal{L}$ to zero, which leads to the update,
\begin{equation*}
\lambda \leftarrow \V(\lambda)^{-1} \nabla_\lambda E_{q_\lambda} \{ \log p(y, \theta)\}.
\end{equation*}
As $\log p(y, \theta) =\log p(y|\theta) + \log p(\theta)$, the update is a sum of messages from the neighboring factors. If $q_\lambda(\theta)$ is $N(\mu, \Sigma)$, then the update for $\lambda$ simplifies to 
\begin{equation} \label{ncvmpupdate}
\begin{aligned}
\Sigma &\leftarrow -\frac{1}{2}\left( \vec^{-1} \left[ \nabla_{\vec(\Sigma)} E_{q_\lambda} \{ \log p(y, \theta)\} \right]  \right)^{-1}, \quad \mu \leftarrow \mu + \Sigma \nabla_\mu E_{q_\lambda} \{ \log p(y, \theta)\}.
\end{aligned}
\end{equation}
Details can be found in \cite{Tan2013} and \cite{Wand2014}. Note that $a = \vec(A)$ is a vector obtained by stacking the columns of matrix $A$ under each other from left to right and $\vec^{-1}(a)$ recovers matrix $A$ from $a$. As a fixed point iteration algorithm, NCVMP is not guaranteed to converge and the lower bound may not necessarily increase after each update. However, if the algorithm converges, then it will be to a local maximum. Convergence issues can be addressed by adjusting the initialization or using damping. More details are given in the supplementary material. 

Let $\tilde{p}(y, \theta) = \tilde{f}(y|\theta) p(\theta)$ denote the joint density obtained by plugging in the adjusted pseudolikelihood for $p(y|\theta)$.
Let $b(x) = \log\{1+\exp(x)\}$, $\alpha_{ij} = \delta_s(y)_{ij}^T ( \hat{\theta}_{\PL}  - W\hat{\theta}_{\ML})$ and $\beta_{ij} = W^T \delta_s(y)_{ij}$ so that $\delta_s(y)_{ij}^Tg(\theta) = \alpha_{ij} + \beta_{ij} \theta$. Then 
\begin{equation} \label{logjoint}
\begin{aligned}
E_{q_\lambda} \{\log \tilde{p}(y, \theta) \} &= \log M + \sum_{(i,j) \in \D} [y_{ij} (\alpha_{ij} + \beta_{ij}^T \mu)  - E_{q_\lambda} \{b(\alpha_{ij} + \beta_{ij}^T \theta) \}] \\
& -  \tfrac{p}{2}\log (2\pi) - \tfrac{1}{2}\log|\Sigma_0| -  \tfrac{1}{2}(\mu - \mu_0)^T \Sigma_0^{-1}(\mu - \mu_0) -  \tfrac{1}{2}\tr(\Sigma_0^{-1} \Sigma).
\end{aligned}
\end{equation}
The approximate lower bound $\tilde{\L} = E_{q_\lambda} \{\log \tilde{p}(y, \theta) - \log q_\lambda(\theta)\}$ and other derivation details are given in the supplementary material. The term $E_{q_\lambda}\{ b(\alpha_{ij} +\beta_{ij}^T\theta) \}$ is approximated using Gauss-Hermite quadrature. Now $\theta \sim N(\mu, \Sigma)$ if and only if  $\alpha_{ij} + \beta_{ij}^T\theta \sim N(m_{ij}, v_{ij}^2)$, where $m_{ij} = \alpha_{ij} + \beta_{ij}^T\mu$ and $v_{ij}^2 = \beta_{ij}^T \Sigma \beta_{ij}$. Let $b^{r} (\cdot)$ denote the $r$th derivative of $b(\cdot)$ and define $B^{(r)}(m_{ij}, v_{ij}) = E_{q_\lambda}\{ b^{(r)}(\alpha_{ij} +\beta_{ij}^T\theta) \}$. Then
\begin{equation*}
\begin{aligned}
B^{(r)}(m_{ij}, v_{ij}) =  \int b^{(r)}(\alpha_{ij} +\beta_{ij}^T\theta) q_\lambda(\theta) d\theta
=  \int_{-\infty}^\infty b^{(r)}(x) \phi(x|m_{ij}, v_{ij}) dx 
&=  \int_{-\infty}^\infty g_{ij}^{(r)}(z)  dz,
\end{aligned}
\end{equation*}
where $\phi(x|m, v)$ denotes the density of $N(m, v^2)$ and $g_{ij}^{(r)}(z) =  b^{(r)}(v_{ij} z + m_{ij}) \phi(z|0,1)$. This conversion of $B^{(r)}(m_{ij}, v_{ij})$ from a multivariate to a univariate integral was proposed in \cite{Ormerod2012} and used in \cite{Tan2013}. Let $\{x_d\}_{d=1}^D$ be zeros of the $D$th order Hermite polynomial and $\{w_d\}_{d=1}^D$ be the corresponding weights. In Gauss-Hermite quadrature, $\int_{-\infty}^\infty f(x)\e^{-x^2} dx \approx \sum_{d=1}^D w_d f(x_d)$. Following \cite{Liu1994}, we first transform $z$ so that the integrand $g^{(r)}_{ij}(z)$ is sampled in a suitable range. Let $\hat{m}_{ij}^{(r)}$ be the mode of $g^{(r)}_{ij}(z)$, $(\vijr)^{-2}= - \partial^2 \log g_{ij}^{(r)}(z)/\partial^2 z |_{z=\hat{m}_{ij}}$ and $w_d^* = w_d \exp(x_d^2)$ be the modified weights. We have
\begin{multline*}
B^{(r)}(m_{ij}, v_{ij}) = \int_{-\infty}^\infty \frac{g_{ij}^{(r)}(z)}{\phi(z|\mijr, \vijr)} \phi(z|\mijr, \vijr)dz \\
= \sqrt{2}\vijr  \int_{-\infty}^\infty \e^{x^2} g_{ij}^{(r)}(\sqrt{2}\vijr x + \mijr) \e^{-x^2}dx
\approx \sqrt{2} \vijr  \sum_{d=1}^D w_d^* g_{ij}^{(r)} (\sqrt{2} \vijr x_d + \mijr ).
\end{multline*}

From \eqref{logjoint}, differentiating $E_{q_\lambda} \{\log \tilde{p}(y, \theta)\} $ with respect to $\mu$ and $\vec(\Sigma)$, 
\begin{equation*}
\begin{gathered}
\nabla_\mu  E_{q_\lambda} \{\log \tilde{p}(y, \theta)\} =
\sum_{(i,j) \in \D} \{ y_{ij} - B^{(1)}(m_{ij}, v_{ij}) \} \beta_{ij} - \Sigma_0^{-1}(\mu - \mu_0), \\
\nabla_{ \vec(\Sigma)} E_{q_\lambda} \{\log \tilde{p}(y, \theta)\} = -\frac{1}{2}\vec\bigg(\Sigma_0^{-1} + \sum_{(i,j) \in \D} B^{(2)}(m_{ij}, v_{ij}) \beta_{ij} \beta_{ij}^T \bigg).
\end{gathered}
\end{equation*}
Details are given in the supplementary mterials. Substituting these gradients in \eqref{ncvmpupdate}, we obtain the NCVMP algorithm.

\floatname{algorithm}{NCVMP Algorithm}
\begin{algorithm}
\caption{}
\begin{enumerate}
\itemsep-0.1em 
\item Find the nodes $\{x_d\}_{d=1}^D$ of $D$th order Hermite polynomial and weights $\{w_d^*\}_{d=1}^D$.
\item Find adjusted pseudolikelihood parameters: $\hat{\theta}_\PL$, $\hat{\theta}_\ML$, $M$, $W$ and compute $\{\alpha_{ij}\}$, $\{\beta_{ij}\}$. 
\item Initialize $\mu = \hat{\theta}_{ML}$, $\Sigma= 0.01I_p$ and compute $\{m_{ij}\}$,  $\{v_{ij}\}$, $\tilde{\L}^{\text{old}}$. Set $\epsilon=1$.
\item While $\epsilon > \text{tolerance}$,
\begin{enumerate}[i.]
\vspace*{-2mm}
\item Update $\Sigma \leftarrow \big( \Sigma_0^{-1} + \sum_{(i,j) \in \D} B^{(2)}(m_{ij}, v_{ij}) \beta_{ij} \beta_{ij}^T \big)^{-1}$.
\item Update $\mu \leftarrow \mu + \Sigma \big[ \sum_{(i,j) \in \D} \{ y_{ij} - B^{(1)}(m_{ij}, v_{ij}) \} \beta_{ij} - \Sigma_0^{-1}(\mu - \mu_0)  \big]$.
\item Update $m_{ij} \leftarrow \alpha_{ij} + \beta_{ij}^T\mu$ and $v_{ij}^2 \leftarrow \beta_{ij}^T \Sigma \beta_{ij}$ for all $(i,j) \in \D$.
\item Compute new lower bound $\tilde{\L}^{\text{new}}$ and $\epsilon = (\tilde{\L}^{\text{new}} - \tilde{\L}^{\text{old}})/ |\tilde{\L}^{\text{old}}|$. $\tilde{\L}^{\text{old}} \leftarrow \tilde{\L}^{\text{new}}$.
\end{enumerate}
\end{enumerate}
\vspace{-3mm}
\end{algorithm}

The nodes and weights in step 1 can be obtained in Julia using {\tt gausshermite} from the package {\tt FastGaussQuadrature}, and we set $D=20$. The NCVMP algorithm is not guaranteed to converge to a local maximum as it is a fixed-point iteration method and also due to the approximation of expectations using Gauss-Hermite quadrature. However, we can compute $\tilde{\L}$ at each iteration to check that the algorithm is moving towards a local maximum. If $\tilde{\L}$ does not increase, we can use damping. The algorithm is terminated when the relative increase in $\tilde{\L}$ is negligible, with tolerance set as $10^{-5}$. NCVMP can be sensitive to the initialization. Here we initialize $\mu$ as $\hat{\theta}_{\ML}$, an informative starting point.

\section{Stochastic variational inference} \label{sec_SVI}
Instead of using Gauss-Hermite quadrature to approximate intractable expectations, we can optimize $\L$ with respect to $\lambda$ using stochastic gradient ascent \citep{Robbins1951}. Let $CC^T$ be the unique Cholesky decomposition of $\Sigma$, where $C$ is a $p \times p$ lower triangular matrix with positive diagonal entries, and $\vech(A)$ denote the vector obtained by vectorizing the lower triangular part of a square matrix $A$. At each iteration $t$, 
\begin{equation}\label{SG_update}
\mu^{(t+1)} = \mu^{(t)} + \rho_t \hat{\nabla}_\mu \L^{(t)}, \quad
\vech(C^{(t+1)}) = \vech(C^{(t)}) + \rho_t \hat{\nabla}_{\vech(C)} \L^{(t)},
\end{equation}
where $\hat{\nabla}_\mu \L^{(t)}$ and $\hat{\nabla}_{\vech(C)} \L^{(t)}$ are unbiased estimates of $\nabla_\mu \L$ and $\nabla_{\vech(C)} \L$ respectively and $\rho_t$ denotes the step-size. Convergence is ensured if some regularity conditions are fulfilled and the step size satisfies $\rho_t\rightarrow 0$, $\sum_t \rho_t= \infty$, $\sum_t \rho_t^2 < \infty$ \citep{Spall2003}. 

As $\L$ is an expectation with respect to $q_\lambda$, unbiased gradients can be obtained by simulating $\theta$ from $q_\lambda(\theta)$. We use the {\em reparametrization trick} and apply the transformation $\theta = Cs + \mu$, where $s \sim N(0, I_p)$ and has density $\phi(s)$. Then
\begin{equation} \label{LB2}
\mathcal{L} = E_{\phi} \{ \log p(\theta,y)\} - E_{\phi} \{  \log q_\lambda(\theta) \}, 
\end{equation}
where $\theta = Cs + \mu$ and $E_\phi$ denotes expectation with respect to $\phi(s)$. This moves the variational parameters inside the expectation so that the stochastic gradients are direct functions of $\{\mu, C\}$. Unbiased gradient estimates can be obtained by simulating $s \sim \phi(s)$. The reparametrization trick does not always reduce the variance of the stochastic gradients \citep{Gal2016}. However, \cite{Xu2019} show that the variance of stochastic gradients obtained using this trick are smaller than that obtained using the score function \citep{Williams1992} under a mean-field variational Bayes Gaussian approximation, if the log joint density is a quadratic function centered at the variational mean. 

Although $E_{\phi}\{ \log q_\lambda(\theta)\}$ can be evaluated analytically, estimating both terms in \eqref{LB2} using the same samples $s \sim \phi(s)$ allow the stochasticity from $s$ in the two terms to cancel out so that there is smaller variation in the gradients at convergence \citep{Roeder2017,Tan2018, Tan2018b}. As $\log q_\lambda(\theta)$ depends on $\{ \mu, C\}$ directly as well as through $\theta$, we apply the chain rule to obtain
\begin{align}
\nabla_{\mu} \L &= E_{\phi} \{\nabla_{\theta} \log p(\theta, y) - \nabla_{\theta} \log q_\lambda(\theta) - \nabla_{\mu} \log q_\lambda(\theta)\},  \label{g1} \\
\nabla_{\vech(C)} \L &= E_{\phi} [\vech \{ \nabla_{\theta}\log p(\theta, y)s^T - \nabla_{\theta} \log q_\lambda(\theta) s^T\} - \nabla_{\vech(C)} \log q_\lambda(\theta) ],  \label{g2} 
\end{align}
where $- \nabla_{\theta} \log q_\lambda(\theta) = \nabla_{\mu} \log q_\lambda(\theta) = C^{-T} s$ and $\nabla_{\vech(C)} \log q_\lambda(\theta) =  \vech(C^{-T} (ss^T - I)]$. Derivations are given in the supplementary material. The last term in \eqref{g1} and \eqref{g2} together represent the score of $q_\lambda$, whose expectation is zero. Hence we can omit these terms to construct unbiased gradient estimates,
\begin{equation*}
\begin{aligned}
\hnabla_{\mu} \L = \nabla_{\theta}\log p(\theta, y)+ C^{-T} s, \quad
\hnabla_{\vech(C)} \L = \vech \{ \hnabla_{\mu} \L s^T \}.
\end{aligned}
\end{equation*}
Omitting the last term in \eqref{g1} and \eqref{g2} yield better results as gradients constructed in this way are approximately zero at convergence \citep{Tan2018b}.

The update for $\vech(C)$ in \eqref{SG_update} does not ensure diagonal entries of $C$ remain positive. Hence we introduce lower triangular matrix $C'$, where $C'_{ii} = \log(C_{ii})$ and $ C'_{ij} = C_{ij}$ if $i \neq j$, and update $\vech(C')$ instead. Let $D_C = \diag\{ \vech( \tilde{C})\}$ where $\tilde{C}$ is a $p \times p$ matrix with the diagonal of $C$ and ones everywhere else. Then $\nabla_{\vech(C')} \L = D_C \nabla_{\vech(C)} \L$. 

At each iteration $t$, given $\theta^{(t)} = C^{(t)}s^{(t)} +\mu^{(t)}$ for $s^{(t)} \sim N(0, I_p)$, we need to compute
\begin{equation*}
\nabla_\theta \log p(\theta^{(t)} , y)  = s(y) - E_{y|\theta^{(t)} } [s(y)]- \Sigma_0^{-1} (\theta ^{(t)} - \mu_0).
\end{equation*}
Estimating $E_{y|\theta^{(t)}} [s(y)]$ is challenging because sampling from the likelihood is computationally intensive. We discuss two approaches below. The first retains unbiasedness of the gradients, while the second results in biased but consistent gradients.

\subsection{Monte Carlo sampling}
Given $\theta^{(t)}$ at iteration $t$, let $\mathcal{S}_K(\theta^{(t)}) = \{s(y_1^{(t)}), \dots, s(y_K^{(t)})\}$ be the set of sufficient statistics of $K$ networks, $\{y_1^{(t)}, \dots, y_K^{(t)}\}$, simulated from $p(y|\theta^{(t)})$. We can compute an unbiased Monte Carlo estimate, $E_{y|\theta^{(t)}} \{s(y)\} \approx \frac{1}{K} \sum_{k=1}^K s(y_k^{(t)})$. In stochastic approximation, we only require unbiased gradient estimates for convergence and $K$ need not be large. We investigate the performance of this approach for $K$ as small as one in our experiments.

\subsection{Self-normalized importance sampling}
Suppose $\mathcal{S}_K(\theta^u) = \{s(y_1^u), \dots, s(y_K^u)\}$ is the set of sufficient statistics of $\{y_1^u, \dots, y_K^u\}$ simulated from $p(y|\theta^u)$ for some $\theta^u \in \mathbb{R}^p$. From \eqref{ztheta}, 
\begin{multline*}
E_{y|\theta^{(t)}} \{s(y)\} 
= \frac{z(\theta^u)}{z(\theta^{(t)})} \sum_{y \in \mathcal{Y}}  s(y) \exp \{ s(y)^T (\theta^{(t)}-\theta^u) \} p(y|\theta^u)  \\
= \frac{E_{y|\theta^u} [s(y) \exp \{ s(y)^T (\theta^{(t)}-\theta^u) \} ]}{E_{y|\theta^u} [ \exp\{ s(y)^T (\theta^{(t)} - \theta^u) \} ]  }  
\approx \sum_{k=1}^K \tilde{w}_k^u s(y_k^u),
\end{multline*} 
where $w_k^u =\exp \{ s(y_k^u)^T (\theta^{(t)}-\theta^u) \} $ and $\tilde{w}_k^u = w_k^u/ (\sum_{k'=1}^K w_{k'}^u )$ is the {\em normalized weight}. The SNIS estimate is consistent by the strong law of large numbers but induces a small bias of $\mathcal{O}(1/K)$ \citep{Liu2004}, and is asymptotically unbiased. Convergence results in stochastic gradient descent usually require unbiased gradients, but biased gradients have been used in recent works for efficiency \citep[e.g.][]{Chen2018, Le2019}. \cite{Tadic2017} prove that iterates of stochastic gradient search using biased gradients converge to a neighborhood of the set of minima, conditional on the asymptotic bias of the gradient estimator. \cite{Chen2019} show that consistent but biased gradient estimators exhibit similar convergence behaviors as unbiased ones. These studies lend support that stochastic gradient ascent with biased but consistent gradients will converge to a vicinity of the optima.

SNIS alleviates the burden of sampling at every iteration and improves the convergence rate of the stochastic approximation algorithm tremendously. A good initial choice of $\theta^u$ is $\hat{\theta}_\ML$, but it is unlikely that SNIS based on the proposal of $p(y|\hat{\theta}_\ML)$ will work well for any $\theta^{(t)}$. For instance, SNIS may be poor if $\theta^{(t)}$ and $\hat{\theta}_\ML$ are far apart. One way of assessing how different the proposal is from the target distribution and the efficiency of the SNIS estimate is to use an approximation of the effective sample size \citep{Kong1994, Martino2017}, $\ESS = 1/{\sum_{k=1}^K (\tilde{w}_k^u)^2}$. As an example, consider $10^4$ networks  simulated from $p(y|\hat{\theta}_\ML)$ for the karate network (Model 2) in Section \ref{sec:karate}. Suppose SVI is performed with gradients estimated using SNIS with $p(y|\hat{\theta}_\ML)$ as proposal. From Figure \eqref{karate_IS}, the ESS decreases rapidly as the distance between $\theta^{(t)}$ and $\hat{\theta}_\ML$ increases. Moreover, the ESS is lower than 10000/3 for 4158 out of 7000 iterations suggesting very poor efficiency.
\begin{figure}[tb!]
\centering
\includegraphics[width=0.8\textwidth]{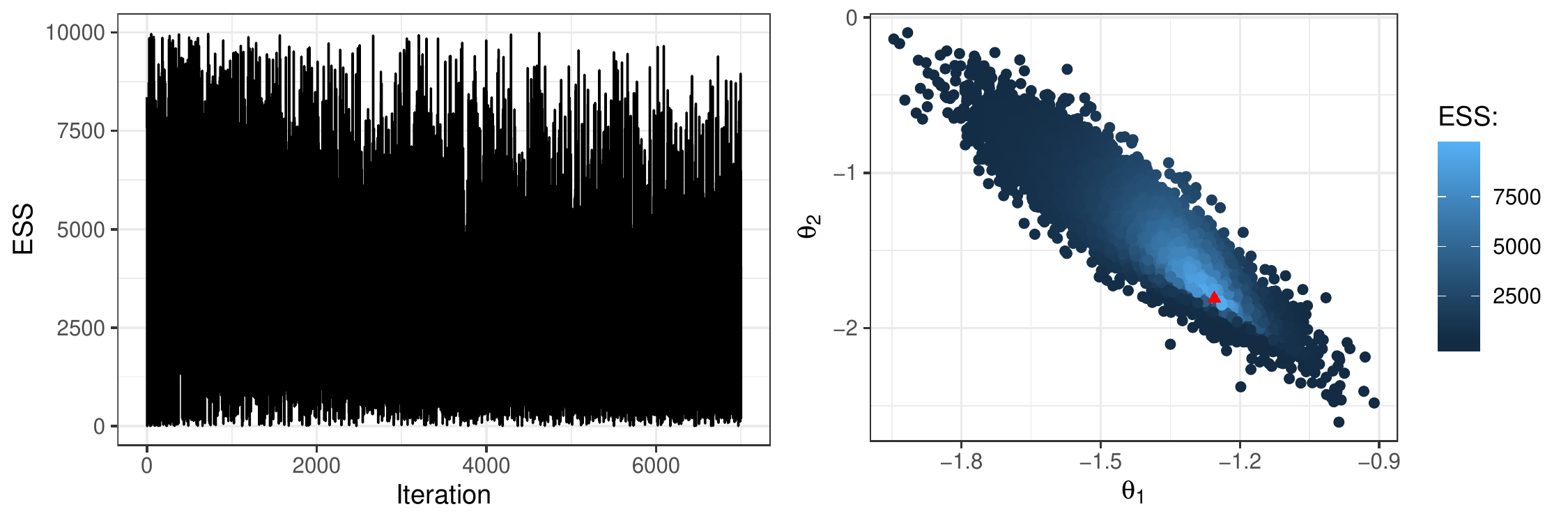}
\caption{\small Karate network. SVI is performed using SNIS based on $10^4$ networks simulated from $p(y|\hat{\theta}_\ML)$. Left: ESS at each iteration. Right: $\theta^{(t)}$ colored according to ESS at iteration $t$. Red triangle denotes $\hat{\theta}_\ML$.} \label{karate_IS}
\end{figure}

To minimize the cost of simulating from the likelihood and avoid poor SNIS estimates, we propose an {\em adaptive} sampling strategy, which maintains a collection $\mathbb{S}$ of particles and associated sufficient statistics at any iteration. Given $\theta^{(t)}$, normalized importance weights are computed using the particle $\theta^u \in \mathbb{S}$ closest to $\theta$. We use the Mahalanobis distance to measure closeness and the current estimate of $\Sigma =CC^T$ as the covariance matrix, as it is scale-invariant and takes into account correlations in the parameter space. Thus
\begin{equation*}
d_M(\theta^{(t)}, \theta^u) = \sqrt{(\theta^{(t)} - \theta^u)^T {C^{(t)}}^{-T} {C^{(t)}}^{-1}(\theta^{(t)} - \theta^u)}.
\end{equation*}
If the $\ESS$ is lower than some threshold, say $K/3$, $K$ networks are simulated from $p(y|\theta^{(t)})$ and $E_{y|\theta^{(t)}} \{s(y)\}$ is estimated using Monte Carlo. The particle $\theta^{(t)}$ and $\mathcal{S}_K(\theta^{(t)})$ are then added to $\mathbb{S}$. Otherwise, the SNIS estimate is used. We begin with just one particle $\hat{\theta}_\ML$ and $\mathcal{S}_K(\hat{\theta}_\ML)$, and allow $\mathbb{S}$ to grow as the algorithm proceeds. This strategy is likely to be more effective for low-dimensional problems, as a large number of particles may be required to cover the region where $q_\lambda(\theta)$ is practically nonzero if $p$ is large.

Related MCMC approaches include the adaptive exchange algorithm \citep{Liang2016}, which uses samples from pre-specified particles chosen using fractional double Metropolis-Hastings and a max-min procedure. \cite{Atchade2013} develop an adaptive MCMC algorithm for approximating $z(\theta)$ through a linear combination of importance sampling estimates based on multiple particles. The particles are selected using stochastic approximation recursion at the beginning and remain fixed throughout the algorithm.

\subsection{Diagnosing convergence and adaptive stepsize}
We use the evidence lower bound $\L$ to diagnose convergence of the SVI Algorithm. Suppose we simulate $s^{(t)}$ from $N(0, I_p)$ and $K_0$ samples from $p(y|\hat{\theta}_\ML)$, with sufficient statistics $\mathcal{S}_{K_0} (\hat{\theta}_\ML) = \{s(y_1), \dots, s(y_{K_0})\}$. From $\eqref{ztheta}$, an estimate of $\L$ at the $t$th iteration is 
\begin{equation} \label{LBest}
\begin{aligned}
\hat{\L}_t & = {\theta^{(t)}}^T s(y) - \log z(\hat{\theta}_\ML) - \log \bigg[ \frac{1}{K_0} \sum_{k=1}^{K_0} \e^{s(y_k)^T (\theta^{(t)}-\hat{\theta}_\ML)} \bigg] \\
& \quad -\tfrac{1}{2} \log|\Sigma_0| - \tfrac{1}{2}(\theta^{(t)} - \mu_0)^T \Sigma_0^{-1}(\theta^{(t)} - \mu_0) + \log|C^{(t)}| + \tfrac{1}{2}{s^{(t)}}^T s^{(t)},
\end{aligned}
\end{equation}
We set $K_0 =1000$ in our experiments. An estimate of $\log z(\hat{\theta}_\ML)$ can be obtained using the {\tt ergm} function from the {\tt ergm} R package \citep{Hunter2008} or the importance sampling procedure described in Section \ref{sec_APL}. As $\{ \hat{\L}_t\}$ are stochastic, we use the average value $\bar{\L}$ over 1000 iterations for diagnosing convergence, which is computed after every 1000 iterations. The algorithm is terminated when the relative increase in $\bar{\L}$ is less than some tolerance (set as $10^{-5}$). The SVI algorithm with $E_{y|\theta^{(t)}}[s(y)]$ estimated using either option (a) Monte Carlo sampling or option (b) SNIS is described below. 

\floatname{algorithm}{SVI Algorithm: option (a) Monte Carlo sampling or option (b) SNIS}
\def\labelenumeratei{i.}
\begin{algorithm}[htb!]
\caption{}
\begin{enumerate}
\itemsep-0.1em 
\item Compute $\hat{\theta}_\ML$, $\log z(\hat{\theta}_\ML)$ and simulate $\mathcal{S}_{K_0} (\hat{\theta}_\ML)$. 
\item If option (b), simulate $\mathcal{S}_{K} (\hat{\theta}_\ML)$ and initialize $\mathbb{S} = \{\hat{\theta}_\ML, \mathcal{S}_{K} (\hat{\theta}_\ML)\}$.
\item Initialize $\mu^{(0)}$, $C^{(0)}$, $C'^{(0)}$, $\bar{\L}^{\text{old}}$ and $\epsilon=1$. Set $t = 0$.
\item While $\epsilon > \text{tolerance}$,
\vspace{-3mm}
\begin{enumerate}[i.]
\item $t \leftarrow t+1$.
\item Generate $s^{(t)} \sim N(0, I_p)$. Compute $\theta^{(t)} = C^{(t)} s^{(t)} + \mu^{(t)}$.
\item If option(a), simulate $\mathcal{S}_K(\theta^{(t)})$. $E_{y|\theta^{(t)}} \{s(y)\} \approx \frac{1}{K} \sum_{k=1}^K s(y_k^{(t)})$. \\ [1mm]
If option (b),
\begin{itemize}
\item Find $\theta_u \in \mathbb{S}$ closest in Mahalanobis distance to $\theta^{(t)}$.
\item Compute $w_k^u =\exp \{ s(y_k^u)^T (\theta^{(t)}-\theta^u) \}$, $\tilde{w}_k^u = w_k^u/ (\sum_{k'=1}^K w_{k'}^u )$ for $k=1, \dots, K$.
\item Compute $\ESS = 1/{\sum_{k=1}^K (\tilde{w}_k^u)^2}$.
\item If $\ESS \geq K/3$, $E_{y|\theta^{(t)}} \{s(y)\} \approx \sum_{k=1}^K  \tilde{w}_k^u s(y_k^u)$. 
\item If $\ESS < K/3$, simulate $\mathcal{S}_K(\theta^{(t)})$. $E_{y|\theta^{(t)}} \{s(y)\} \approx \frac{1}{K} \sum_{k=1}^K s(y_k^{(t)})$. \\[1mm]
$\mathbb{S} \leftarrow \mathbb{S} \cup\{ \theta^{(t)}, \mathcal{S}_K(\theta^{(t)})\}$. 
\end{itemize}
\item Compute $\hat{\nabla}\L_\mu^{(t)} = \nabla_\theta \log p(\theta^{(t)}, y) + {C^{(t)}}^{-T} s^{(t)}$. Update $\mu^{(t+1)}=\mu^{(t)} + \rho_t \hat{\nabla}\L_\mu^{(t)}$.
\item Update $\vech(C'^{(t+1)}) = \vech(C'^{(t)}) + \rho_t D_C\vech  \{\hat{\nabla}\L_\mu^{(t)} {s^{(t)}}^T\}.$ \\ [1mm]
Recover $C^{(t+1)}$ from $C'^{(t+1)}$.
\item Compute lower bound estimate $\hat{\L}_t$ using \eqref{LBest}. 
\item If $(t \equiv 0)$ mod 1000, $\bar{\L}^{\text{new}} = \frac{1}{1000} \sum_{i=t-999}^{i=t} \hat{\L}_i$, $\epsilon = (\bar{\L}^{\text{new}} - \bar{\L}^{\text{old}})/ |\bar{\L}^{\text{old}}|$, $\bar{\L}^{\text{old}} \leftarrow \bar{\L}^{\text{new}}$.
\end{enumerate}
\end{enumerate}
\vspace{-3mm}
\end{algorithm}

We recommend using an adaptive stepsize for $\{\rho_t\}$ in the SVI Algorithm, which adjusts to individual parameters and tends to lead to faster convergence. In our code, Adam \citep{Kingma2014b} is used for computing the stepsize and the tuning parameters are set close to recommended default values. We initialize $\mu$ and $C$ using estimates obtained from the NCVMP algorithm.

\section{Model selection} \label{sec:Model_sel}
A challenging aspect of fitting ERGMs is determining which sufficient statistics to include in the model. In Bayesian inference, different models can be compared using the Bayes factor \citep{Kass1995}. Let $\M_1, \dots, \M_R$ be candidate models for data $y$, with respective parameters, $\theta_1, \dots, \theta_R$, and prior probabilities, $p(\M_1), \dots, p(\M_R)$. Under prior densities $p(\theta_1|\M_1)$, \dots, $p(\theta_R|\M_R)$ of the parameters, the marginal likelihood of $y$ is 
\begin{equation*}
p(y|\M_r) = \int p(y|\theta_r, \M_r) p(\theta_r|\M_r) d\theta_r, \quad r=1, \dots, R.
\end{equation*}
We can then compare the models in terms of their posterior probabilities,
\begin{equation*}
p(\M_r|y) = \frac{p(y|\M_r)p(\M_r)}{\sum_{r=1}^{R} p(y|\M_r)p(\M_r)}, \quad r=1, \dots, R.
\end{equation*}
If each model is a priori equally likely, $p(\M_r)=1/R$ for $r=1, \dots, R$, then selecting the model with the highest posterior probability, $\argmax_r p(\M_r|y)$ is the same as selecting the one with the highest marginal likelihood, $\argmax_r p(y|\M_r)$. The Bayes factor for comparing any two models $\M_r$ and $\M_j$ is $\BF_{rj} = {p(y|\M_r)} /{p(y|\M_j)}$. When $\BF_{rj} > 1$, the data favors $\M_r$ over $\M_j$. If $0<\BF_{rj} < 1$, $\M_j$ is favored instead. To compare more than two models at the same time, we can choose one model as the reference and compute the Bayes factors relative to that reference.

While $\L$ provides a lower bound to the log marginal likelihood $\log p(y)$, using $\L$ for model selection may not yield reliable results as the tightness of the bound cannot be quantified. Instead, we consider the {\em importance weighted lower bound} \cite[IWLB,][]{Burda2016}. Let $\theta_1, \dots, \theta_V$ be generated independently from $q_\lambda(\theta)$. From Jensen's inequality,
\begin{equation*}
\log p(y) = \log \int \frac{p(y, \theta)}{q_\lambda(\theta)} q_\lambda(\theta) d\theta 
= \log E_{q_\lambda} \bigg( \frac{1}{V} \sum_{v=1}^V \omega_v \bigg)
\geq  E_{q_\lambda} \bigg( \log \frac{1}{V} \sum_{v=1}^V \omega_v \bigg) = \L_V^\IW,
\end{equation*}
 where $\omega_v = p(y, \theta_v)/ q_\lambda(\theta_v)$. Thus $\L_V^\IW$ provides a lower bound to $\log p(y)$. \cite{Burda2016} show that $\L_V^\IW$ increases monotonically with $V$ and approaches $\log p(y)$ as $V \rightarrow \infty$ (by the strong law of large numbers). Thus $ \L_V^\IW$ is an asymptotically unbiased estimator of $\log p(y)$, which can be useful in model selection if $V$ is sufficiently large. 

We use the IWLB algorithm to compute an IWLB estimate from the fitted $q_\lambda(\theta)$, where the importance weights are approximated by (I) plugging in the adjusted pseudolikelihood for the true likelihood or (II) a Monte Carlo estimate based on $\hat{\theta}_\ML$ as reference. The algorithm increases $V$ by $J$ units in each iteration until the increase in the IWLB is negligible. We set the tolerance as $10^{-5}$, $N=1000$ and each increment $J=50$.

\floatname{algorithm}{IWLB Algorithm}
\begin{algorithm}[htb!]
\caption{}
\begin{enumerate}
\itemsep-0.1em 
\item Initialize $t=0$, $V=0$, $\epsilon=1$, $\bar{\omega}$ as a zero vector of length $N$ and $(\hat{\L}_V^{\IW})^{\text{old}}= \L$. 
\item While $\epsilon  >$ tolerance,
\vspace{-2mm}
\begin{enumerate}[i.]
\item $t \leftarrow t + 1$, $ V \leftarrow V + J$.
\item Generate $NJ$ samples $\{\theta_{11}, \dots, \theta_{NJ} \}$ from $q_\lambda(\theta)$.
\item For $i=1, \dots, N$, $j=1, \dots, J$, $\log \omega_{ij} = \log p(y|\theta_{ij}) + \log p(\theta_{ij}) - \log q_\lambda(\theta_{ij})$, where 
\begin{equation*}
\log p(y|\theta_{ij}) \approx \begin{cases}
\log  \tilde{p}(y, \theta_{ij}) & \text{if (I)},\\
\theta_{ij}^T s(y) - \log z(\hat{\theta}_\ML) - \log \big[ \frac{1}{K_0} \sum_{k=1}^{K_0} \e^{s(y_k)^T (\theta_{ij}-\hat{\theta}_\ML)} \big] & \text{if (II)}.
\end{cases}
\end{equation*}
\item Compute $\tilde{\omega} = [\tilde{\omega}_1, \dots, \tilde{\omega}_N]^T$ where $\tilde{\omega}_i = \sum_{j=1}^J \omega_{ij}$.
\item $\bar{\omega} \leftarrow \{(t-1)J\bar{\omega}  + \tilde{\omega}\}/V$.
\item Compute IWLB estimate: $(\hat{\L}_V^{\IW})^{\text{new}} = \frac{1}{N} \sum_{i=1}^N \log \bar{\omega}_i $.
\item Compute $\epsilon = \{ (\hat{\L}_V^{\IW})^{\text{new}} - (\hat{\L}_V^{\IW})^{\text{old}} \}/|(\hat{\L}_V^{\IW})^{\text{old}} |$. 
$(\hat{\L}_V^{\IW})^{\text{old}} \leftarrow  (\hat{\L}_V^{\IW})^{\text{new}}$.
\end{enumerate}
\end{enumerate}
\vspace{-3mm}
\end{algorithm}
The computational advantage in using the IWLB for model selection as compared to MCMC approaches is that it does not require tuning, determining appropriate length of burn-in or checking of convergence diagnostics, which can be tedious when a large number of candidate models are compared.

\section{Applications} \label{sec:Appl}
We illustrate the performance of proposed variational methods using three real networks shown in Figure \ref{networks}. The code for the variational algorithms is written in Julia 0.6.4 and the experiments are run on a Intel Core i9 CPU @ 3.60GHz, 16.0GB RAM. Maximum likelihood estimation, maximum pseudolikelihood estimation and simulation of networks from the likelihood are performed using the {\tt ergm} R package version 3.8.0 \citep{Hunter2008}. This is done in Julia using the {\tt RCall} package. We use the {\tt bergm} function from the {\tt Bergm} R package version 4.1.0 \citep{Caimo2014} to sample from the posterior distribution via the exchange algorithm \citep{Caimo2011}. The number of chains must be at least the dimension of $\theta$ and adaptive direction sampling (ADS) is used to improve mixing. We use the default in {\tt bergm}, which sets the number of chains to be twice the length of $\theta$. The parameters $\gamma$ (move factor in ADS) and $\sigma_\epsilon$ (variance of normal proposal) are tuned so that the acceptance rate lies between 20\% and 25\%. Posterior distributions estimated using the exchange algorithm are regarded as ``ground truth" and we use them to evaluate the accuracy of posterior distributions approximated using variational methods. We also compare the variational densities with Gaussian posteriors obtained using {\em Laplace approximation}, where the adjusted pseudolikelihood is plugged in for the true likelihood. Details are given in the supplementary material. 
\begin{figure}[tb!]
\centering
\includegraphics[width=0.325\textwidth]{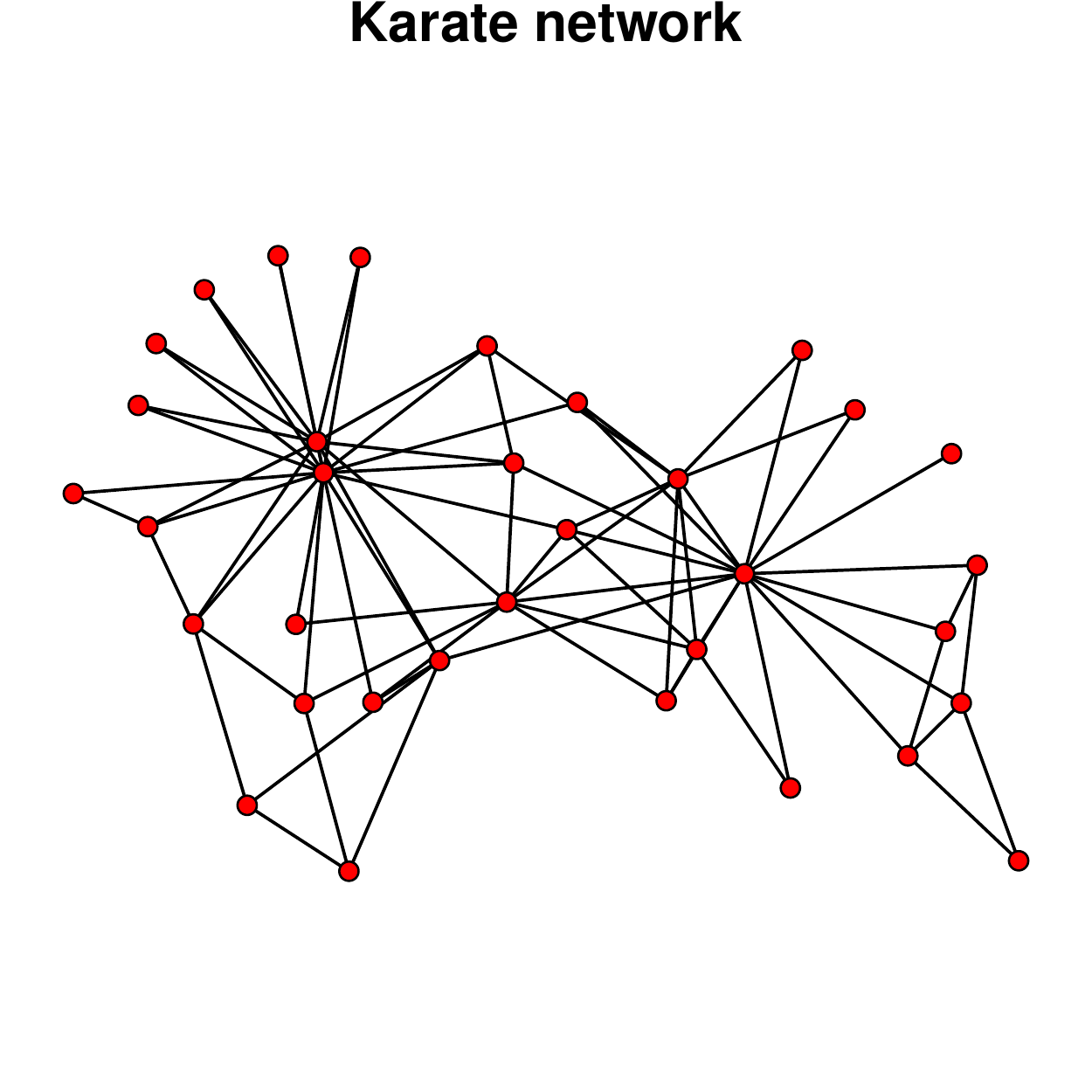}
\includegraphics[width=0.325\textwidth]{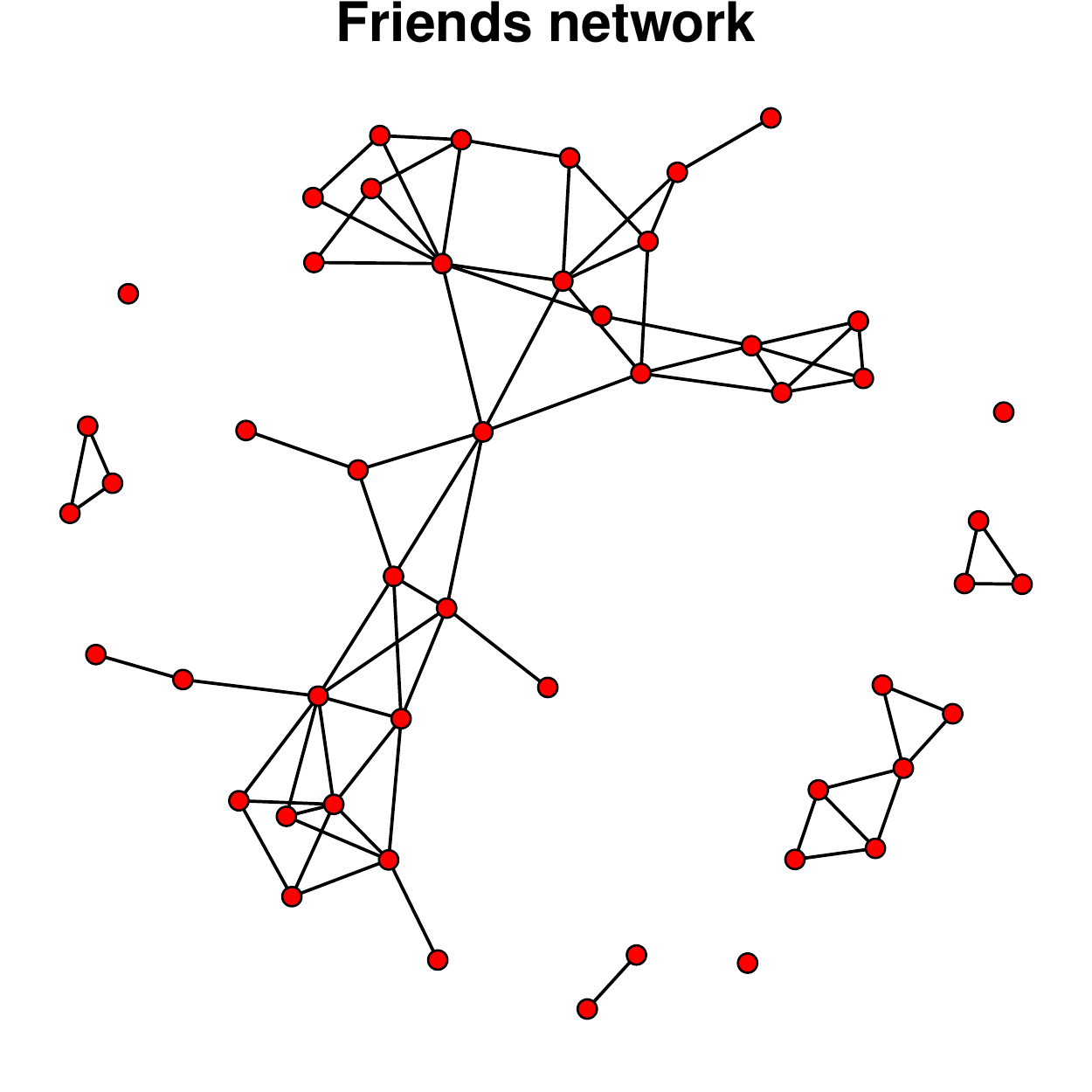}
\includegraphics[width=0.33\textwidth]{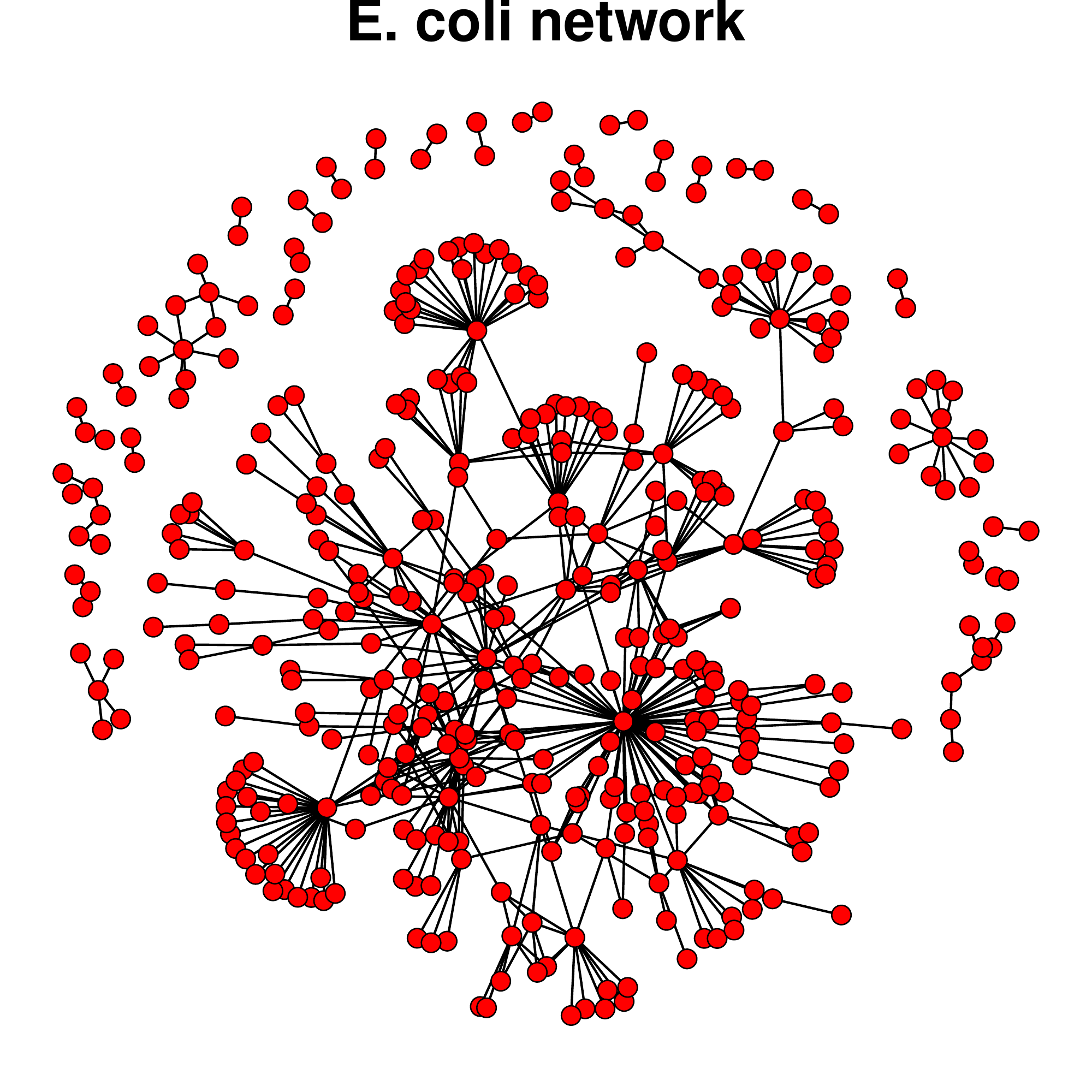}
\caption{\small Plots of karate, friends and ecoli networks.} \label{networks}
\end{figure}

An estimate of the marginal likelihood based on the adjusted pseudolikelihood using Chib and Jeliazkov's method \citep[CJ, ][]{Chib2001} is obtained using the {\tt evidence\_CJ} function \citep{Bouranis2018} from {\tt Bergm}. The parameter for the Metropolis sampling is tuned such that the acceptance rate lies between 20\% and 25\%. We set the total number of iterations as 25,000 including a burn-in of 5000 in each case. All three methods (CJ, NCVMP and Laplace) rely on the adjusted pseudolikelihood; CJ's method samples from the posterior while the latter two use Gaussian approximations. NCVMP tries to minimize the KL divergence between the Gaussian approximation and the true posterior, while the normal density in Laplace approximation is centered at the posterior mode with the covariance matrix taken as the negative inverse Hessian of $\log p(y, \theta)$ evaluated at the mode. 

While updates in NCVMP are deterministic, the SVI algorithm is subject to random variation. We consider $K \in \{1,5,20\}$ for (a) Monte Carlo sampling and $K \in \{100, 200, 500\}$ for (b) SNIS, and study the performance of each setting using ten runs from different random seeds. The {\tt kullback\_leibler\_distance} function from the {\tt philentropy} R package is used to compute the KL divergence of the marginal posterior of $\{\theta_j\}$ obtained using an approximation method from that estimated using the exchange algorithm. We set $\mu_0 = 0$ and $\Sigma_0 = 100 I_p$ for the prior distribution of $\theta$ throughout.

In later examples, we consider models which contain the following sufficient statistics concerning network structure. The first, $s_L(y) = \sum_{i < j} y_{ij}$ is the number of edges, which accounts for the overall density of the observed network. We also consider
\begin{equation*}
\begin{aligned}
 s_{\gwd}(y, \phi_u) &= \e^{\phi_u} \sum_{\ell=1}^{n-1} \{ 1 - (1 - \e^{-\phi_u})^\ell\}  \text{D}_\ell (y), \\
 s_{\gwesp}(y, \phi_v) &= \e^{\phi_v} \sum_{\ell=1}^{n-2} \{ 1 - (1 - \e^{-\phi_v})^\ell \}  \text{EP}_\ell (y),
\end{aligned}
\end{equation*}
which are respectively the geometrically weighted degree (gwd) statistic for modeling the degree distribution of the network and the geometrically weighted edgewise shared partners (gwesp) statistic for modeling transitivity. D$_\ell(y)$ counts the number of nodes in $y$ that have $\ell$ neighbors, and EP$_\ell(y)$ counts the number of connected dyads in $y$ that have exactly $\ell$ common neighbors. These two statistics improve the fit of ERGMs by placing geometrically decreasing weights on higher order terms \citep{Hunter2008a}.

\subsection{Karate network} \label{sec:karate}
The karate club network \citep{Zachary1977} contains 78 undirected friendship links among 24 members, constructed based on interactions outside club activities. This data is available at \url{https://networkdata.ics.uci.edu/}. We consider three competing models \citep{Caimo2014, Bouranis2018}, whose unnormalized likelihoods are 
\begin{equation}\label{models}
\begin{aligned}
\mathcal{M}_1 &: \exp \{\theta_{1} s_L(y) + \theta_{2} s_{\gwesp}(y, 0.2)\}, \\
\mathcal{M}_2 &: \exp \{\theta_{1} s_L(y) + \theta_{2} s_{\gwd}(y,0.8)\}, \\
\mathcal{M}_3 &: \exp \{\theta_{1} s_L(y) + \theta_{2} s_{\gwesp}(y, 0.2)+ \theta_3 s_{\gwd}(y, 0.8)\}.
\end{aligned}
\end{equation}

First, we estimate parameters in the adjusted pseudolikelihood using the {\tt ergm} and {\tt simulate} functions in the {\tt ergm} R package. We set the number of auxiliary iterations as 30000, thinning factor as 1000 and number of simulations as 1000. Computation times are 6.3, 2.4 and 6.9 seconds for $\M_1$, $\M_2$ and $\M_3$ respectively. Next, we fit the three models using CJ's method, Laplace approximation and the exchange and variational algorithms. For the exchange algorithm, the number of auxiliary iterations is also set as 30000, length of burn-in as 1000 and number of iterations per chain excluding burn-in as 10000. We have 4 chains for $\M_1$ and $\M_2$ (40000 samples), and 6 chains for $\M_3$ (60000 samples). Setting the ADS move factor $\gamma$ as 1.1, 1.175 and 0.775 for $\M_1$, $\M_2$ and $\M_3$ in order, the average acceptance rates are 22.5\%, 22.8\% and 23.0\%. For CJ's method, the acceptance rates for $\M_1$, $\M_2$ and $\M_3$ after tuning are 21.7\%, 21.9\% and 23.9\% respectively. 

Computation times of the various algorithms are shown in Table \ref{karate_computation_times}. 
\begin{table}[htb!]
\centering
\begin{small}
\begin{tabular}{ll|cccl}
\hline
Method & $K$ & \multicolumn{3}{c}{Computation time (seconds)}  \\
&& $\mathcal{M}_1$ &  $\mathcal{M}_2$ & $\mathcal{M}_3$ \\
\hline
Exchange && 770.7 & 258.8 & 1264.4 \\ \hline
CJ  &&  3.9 & 4.0 & 6.1  \\
Laplace && 0.1 & 0.0 & 0.0 \\
NCVMP && 0.4 & 0.1 & 0.1  \\ \hline
\multirow{3}{*}{SVI (a)} & 1&  58.7 $\pm$ 14.6 & 79.0 $\pm$ 7.2 & 68.7 $\pm$ 19.2 \\
& 5 &   57.5 $\pm$ 12.3 & 80.2 $\pm$ 11.5 & 83.6 $\pm$ 26.5 \\
& 20 &   90.5 $\pm$ 20.4 & 102.0 $\pm$ 14.8 & 109.1 $\pm$ 38.1 \\ \hline
\multirow{3}{*}{SVI (b)} &  100 &  3.0 $\pm$ 0.2 (23, 29) & 2.4 $\pm$ 0.2 (29, 39) & 10.5 $\pm$ 1.3 (89, 142) \\
& 200 & 4.4 $\pm$ 0.4 (23, 30) & 3.2 $\pm$ 0.2 (34, 40) & 18.4 $\pm$ 2.1 (94, 149) \\
& 500 & 9.0 $\pm$ 0.9 (23, 31) & 5.3 $\pm$ 0.4 (33, 42) & 40.0 $\pm$ 5.6 (98, 159) \\
\hline
\end{tabular}
\end{small}
\caption{\small Karate network. Computation times of various algorithms. For SVI (b), the range of the number of particles in $\mathbb{S}$ over ten runs is given in brackets.} \label{karate_computation_times}
\end{table}
Methods relying on the adjusted pseudolikelihood are the fastest. Among these, the algorithms with deterministic updates (Laplace and NCVMP) are faster than CJ's method which samples from the posterior. SVI (b) converges much faster than SVI (a) for the 2-dimensional models but the speedup is reduced in the 3-dimensional case. This is likely due to the higher rate of sampling from the likelihood, since the number of particles in $\mathbb{S}$ for $\mathcal{M}_3$ is about 3--4 times larger than that for $\mathcal{M}_1$ and $\mathcal{M}_2$. 

The KL divergence of the approximate marginal posterior of each $\theta_j$ from that estimated using the exchange algorithm is shown in Figure \ref{karate_KLdiv}. 
\begin{figure}[tb!]
\centering
\includegraphics[width=\textwidth]{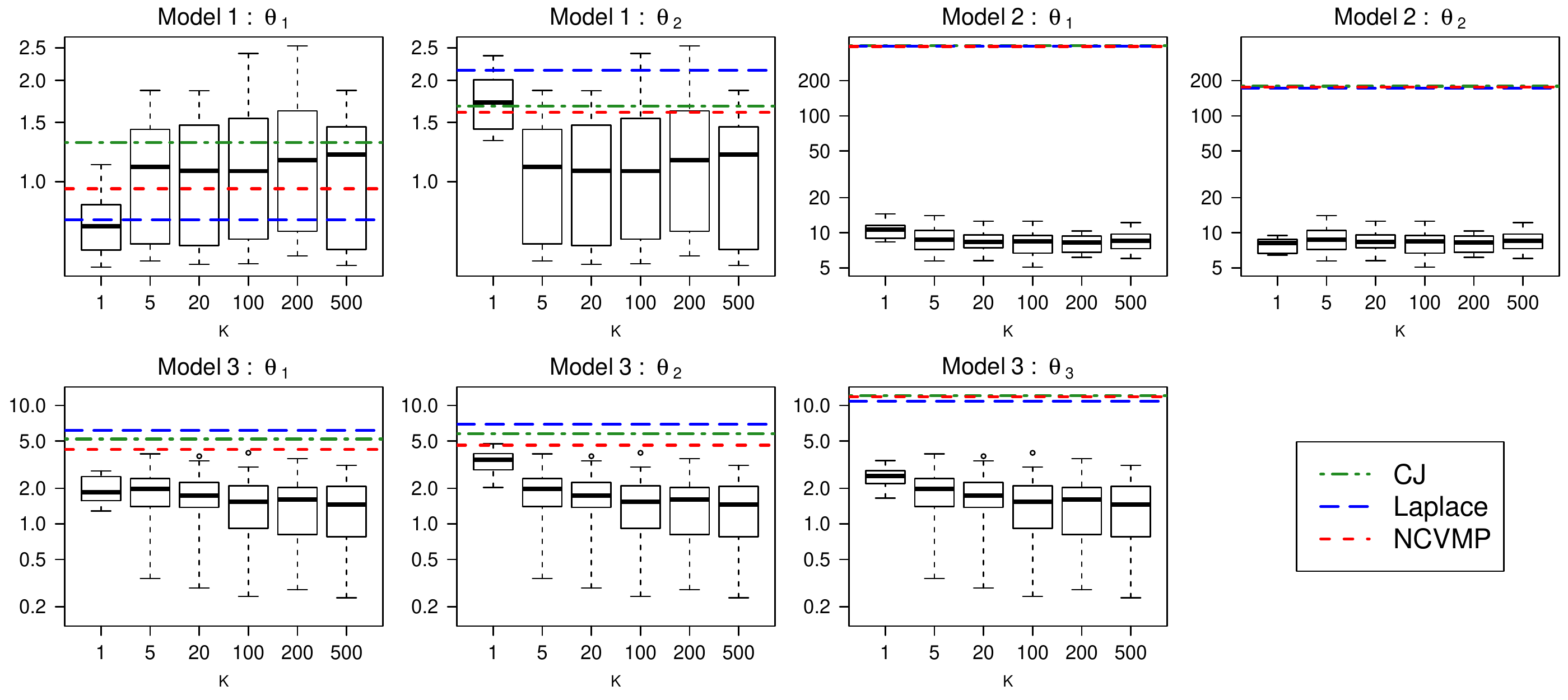}
\caption{\small Karate network. KL divergence of approximate marginal posterior of each $\theta_j$ from posterior estimated using exchange algorithm. In each figure, the first three and last three boxplots correspond to SVI (a) and SVI (b) respectively.} \label{karate_KLdiv}
\end{figure}
NCVMP performs better than CJ's method in all cases, while Laplace is sometimes better and sometimes worse than NCVMP. For SVI (a), the performance of $K=1$ appears to fluctuate more than $K\in \{5,20\}$. However, $K=5$ and $K=20$ perform similarly, and it seems sufficient to use $K=5$. A single sample may not be able to provide sufficient gradient information and stability and we recommend using a few samples for averaging. Performance of SVI (b) is similar across $K \in \{100, 200, 500\}$ and is close to SVI (a) for $K\in \{5,20\}$. The methods relying on adjusted pseudolikelihood (CJ, Laplace, NCVMP) perform worse than the SVI algorithms for $\M_3$ and especially $\M_2$. 

The marginal posteriors of each $\theta_j$ obtained from the exchange algorithm, NCVMP and one randomly selected run of the SVI algorithms are shown in Figure \ref{karate_posterior}. Marginal posteriors from CJ's method and Laplace approximation are not shown to avoid clutter, but they are almost identical to that of NCVMP.
\begin{figure}[htb!]
\centering
\includegraphics[width=0.9\textwidth]{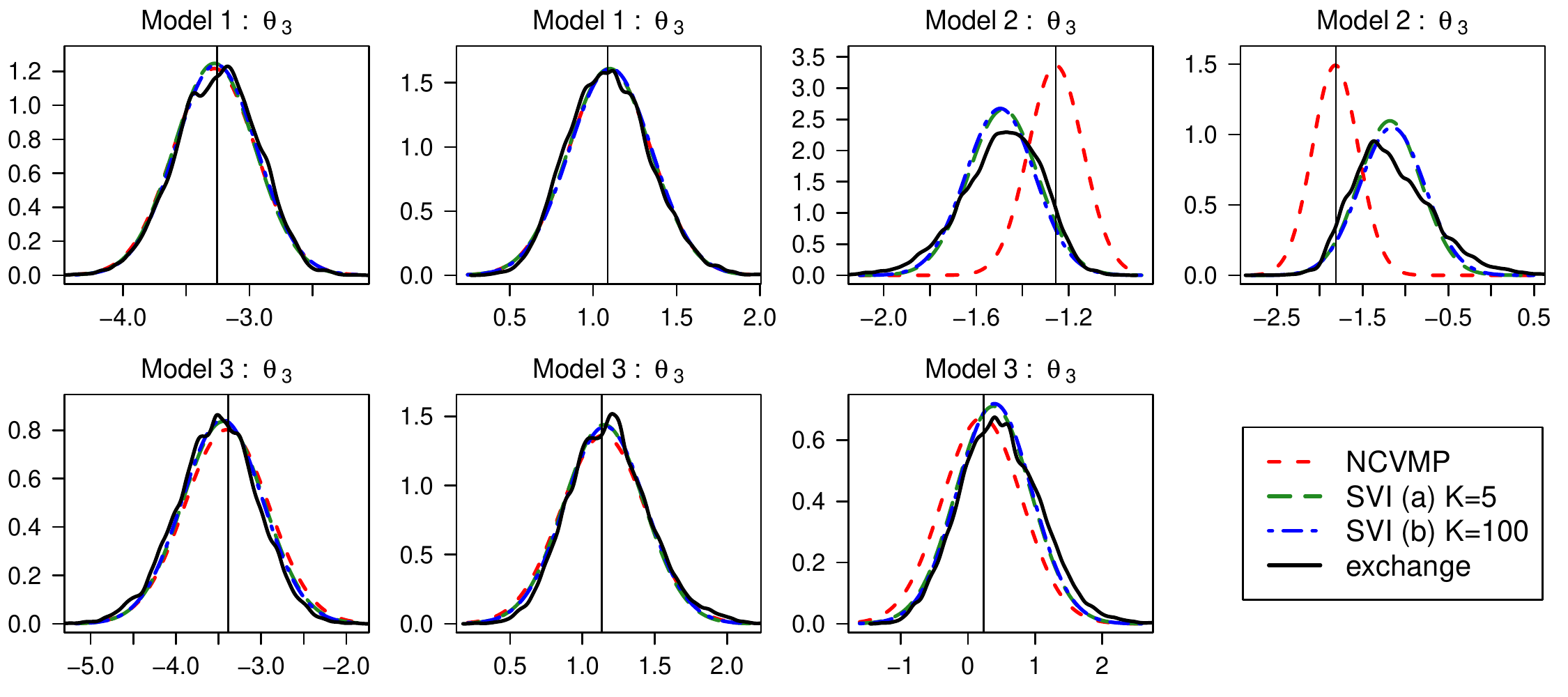}
\caption{\small Karate network. Marginal posterior distributions of each $\theta_j$ from variational and exchange algorithms. Vertical lines indicate values of $\hat{\theta}_\ML$.} \label{karate_posterior}
\end{figure}
For $\M_1$ and $\M_3$, the marginal posteriors obtained using different approaches are quite similar. However, for $\M_2$, the posteriors obtained using NCVMP are quite different from the posteriors obtained using the SVI and exchange algorithms. The difference is likely due to the adjusted pseudolikelihood not being able to mimic the true likelihood well. In particular, the posterior means estimated using NCVMP remain very close to $\hat{\theta}_\ML$. The SVI algorithms capture the (slightly skewed) true posteriors better much than NCVMP for $\M_2$. 

Figure \ref{karate_ESS_particles} shows the ESS and particles in $\mathbb{S}$ 
\begin{figure}[htb!]
\centering
\includegraphics[width=\textwidth]{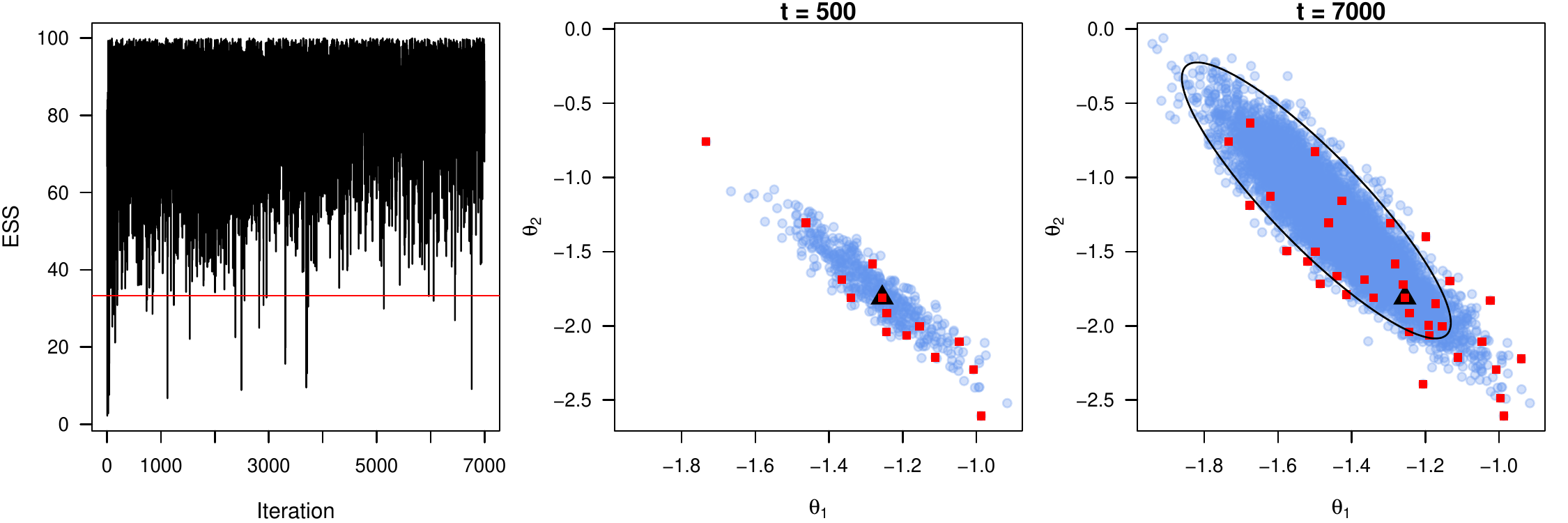}
\caption{\small Karate network. SVI (b) ($K=100$) was run for $\M_2$. Left plot shows ESS at each iteration with red line marking the threshold of $K/3$. Middle and right plot shows $\{\theta^{(t)}\}$ in blue circles and the particles in $\mathbb{S}$ in red squares at the 500th and 7000th iteration respectively. Maximum likelihood estimate is marked using a black triangle. Ellipse in black is the normal density contour that contain 95\% of the probability of the estimated $q_\lambda(\theta)$.} \label{karate_ESS_particles}
\end{figure}
when SVI (b) was run for $\M_2$ with $K=100$. The algorithm converged in 7000 iterations and there were 35 particles in $\mathbb{S}$ eventually. The leftmost plot indicates that the ESS falls below the threshold of $K/3$ more frequently at the beginning. Thus the first part of the iterations are generally more-consuming due to simulation from the likelihood as particles are being added to $\mathbb{S}$. At the 500th iteration, $\{\theta^{(t)}\}$ and the particles in $\mathbb{S}$ are still centered around $\hat{\theta}_{\ML}$. However, the algorithm eventually moves away from $\hat{\theta}_{\ML}$ towards the true posterior mean. The particles in $\mathbb{S}$ are quite evenly spread out across the estimated $q_\lambda(\theta)$.
 
Using the IWLB algorithm, we computed IWLB estimates using (I) adjusted pseudolikelihood for Laplace approximation and NCVMP algorithm, and (II) Monte Carlo estimate for the SVI algorithms (see Table \ref{karate_IWLB}).   
\begin{table}[htb!]
\centering
\begin{small}
\begin{tabular}{ll|ccc|ccc}
\hline
Method & $K$ & \multicolumn{3}{c}{Computation time (seconds)}  & \multicolumn{3}{c}{V} \\
&& $\mathcal{M}_1$ &  $\mathcal{M}_2$ & $\mathcal{M}_3$ & $\mathcal{M}_1$ &  $\mathcal{M}_2$ & $\mathcal{M}_3$\\
\hline
CJ && -219.3 &  -232.6 & -221.8 \\
Laplace & & -219.3 (0.6) & -232.6 (0.4) & -221.8 (1.3) & 100 & 50 & 100 \\
NCVMP & & -219.3 (0.2) & -232.6 (0.3) & -221.8 (1.5) & 50 & 50 & 100 \\ \hline
\multirow{3}{*}{SVI (a)}  & 1 & -219.4 (2.7) & -231.2 (4.9) & -221.7 (4.4) & 100 & (150, 250) & (150, 250) \\
& 5 & -219.4 (2.7) & -231.2 (4.3) & -221.7 (4.2) & 100 & (150, 250) & (150, 200) \\
& 20 & -219.4 (2.7) & -231.2 (4.4) & -221.7 (4.2) & 100 & (150, 250) & (150, 200) \\ \hline
\multirow{3}{*}{SVI (b)} & 100 & -219.4 (2.7) & -231.2 (4.3) & -221.7 (4.1) & 100 & (150, 250) & 150 \\
& 200 & -219.4 (2.7) & -231.2 (4.1) & -221.7 (4.1) & 100 & 150 & 150 \\
& 500 & -219.4 (2.7) & -231.2 (4.4) & -221.7 (4.1) & 100 & (150, 250) & 150 \\
\hline
\end{tabular}
\end{small}
\caption{\small Karate network. IWLB computed using (I) adjusted pseudolikelihood for Laplace and NCVMP and (II) Monte Carlo for SVI algorithms. Computation times (in brackets) and range of values of $V$.} \label{karate_IWLB}
\end{table}
Results for SVI algorithms are averaged over ten runs but the standard deviations are almost zero and hence only the means are displayed. The results from Laplace and NCVMP are identical to CJ's method (all three methods are based on adjusted pseudolikelihood). For the SVI algorithms, the results for $\M_1$ and $\M_3$ are very close to CJ's method but the results for $\M_2$ differ slightly. Recall that methods based on the adjusted pseudolikelihood were unable to  approximate the true posterior accurately for $\M_2$. Thus the IWLB estimate from the SVI algorithms may be more reliable than CJ's estimate for $\M_2$. Finally, it is clear that $\M_1$ is the most favored model, followed by $\M_3$ and then $\M_2$.

 \subsection{Teenage friends and lifestyle study}
Here we consider a subset of 50 girls from the ``Teenage friends and lifestyle study" data set \citep{Pearson2000} available at \url{https://www.stats.ox.ac.uk/~snijders/siena/s50_data.htm}. In this study, friendship links among the students were recorded over three years from 1995 to 1997. The students were also surveyed about their smoking behavior and frequency of drugs consumption among other lifestyle choices. We consider the friendship network at the first time point and the qualitative attributes {\tt smoke}  (1: non-smoker, 2: occasional smoker or 3: regular smoker), {\tt drugs} (1: non-drug user or tried once or 2: occasional or regular drug user) and {\tt sport} (1: not regular or 2: regular). Figure \ref{friendsplot} shows plots of the friendship network according to the attributes. 

There appears to be some homophily in friendships by smoking and drug usage behavior as nodes of the same color (attribute value) seem to have a higher tendency to form links. We consider three models, whose unnormalized likelihoods are given by
\begin{equation*}
\begin{aligned}
\mathcal{M}_1 &: \exp \{\theta_{1} s_L(y) + \theta_{2} s_\gwesp(y, \log 2) + \theta_{2} s_\gwd(y, 0.8)\}, \\
\mathcal{M}_2 &: \exp \{\theta_{1} s_L(y) + \theta_{2} s_\gwesp(y, \log 2) + \theta_{2} s_\gwd(y, 0.8) + s_{\text{drugs}}(y) \}. \\
\mathcal{M}_3 &: \exp \{\theta_{1} s_L(y) + \theta_{2} s_\gwesp(y, \log 2) + \theta_{2} s_\gwd(y, 0.8) + s_{\text{smoke}}(y) + s_{\text{drugs}}(y) + s_{\text{sport}}(y) \}.
\end{aligned}
\end{equation*}
Here $s_{\text{smoke}}(y)$, $s_{\text{drugs}}(y)$ and $s_{\text{sport}}(y)$ counts the number of connected dyads $(i,j)$ for which nodes $i$ and $j$ have the same value for the attributes {\tt smoke}, {\tt drugs} and {\tt sport} respectively. In the {\tt ergm} R package, these terms are coded as {\tt nodematch(`smoke'))}, {\tt nodematch(`drugs')} and {\tt nodematch(`sport')}. 
\begin{figure}[htb!]
\centering
\includegraphics[width=0.9\textwidth]{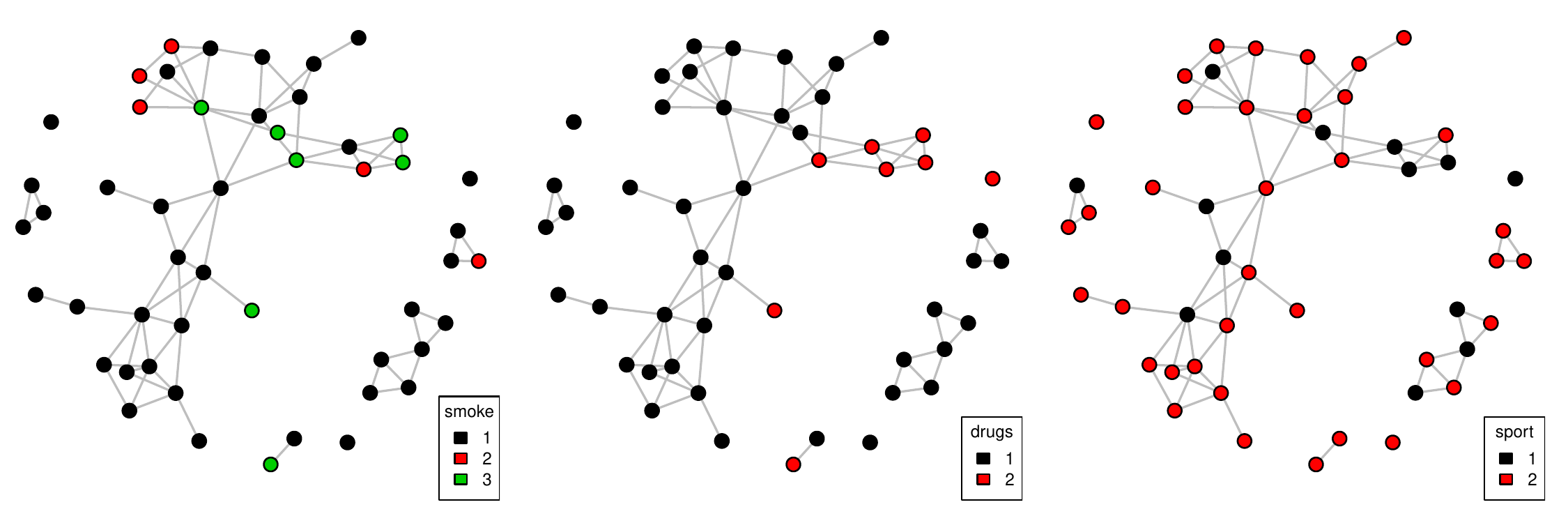}
\caption{Plot of friends network, where nodes are colored according to the attributes.} \label{friendsplot}
\end{figure}

Estimating the adjusted pseudolikelihood took 4.9 and 5.3 and 6.1 seconds for $\M_1$, $\M_2$ and $\M_3$ respectively. We set the number of auxiliary iterations as 50000, thinning factor as 1000 and number of simulations as 1000. Next we fit the three models using CJ's method, Laplace approximation and the exchange and variational algorithms. For the exchange algorithm, the number of auxiliary iterations is also set as 50000, the length of burn-in as 1000 and the number of iterations for each chain excluding burn-in to be 10000. We have 6 chains for $\M_1$ (60000 samples), 8 chains for $\M_2$ (80000 samples) and 12 chains for $\M_3$ (120000 samples). The ADS move factor $\gamma$ is adjusted as 1.8, 1.45 and 1.1 so that average acceptance rates are 22.7\%, 22.2\% and 22.3\% for $\M_1$, $\M_2$ and $\M_3$ respectively. For CJ's method, the tuning parameters were adjusted such that the acceptance rates are 22.1\%, 22.2\% and 23.2\% for $\M_1$  $\M_2$ and $\M_3$ respectively. The computation times for various algorithms are shown in Table \ref{friends_computation_times}. 
\begin{table}[htb!]
\centering
\begin{small}
\begin{tabular}{ll|cccl}
\hline
Method & $K$ & \multicolumn{3}{c}{Computation time (seconds)}  \\
&& $\mathcal{M}_1$ &  $\mathcal{M}_2$ & $\mathcal{M}_3$ \\ \hline
Exchange && 1778.8 & 2290.3 & 3612.9 \\ \hline
CJ &&   5.1 & 5.8 & 7.9 \\
Laplace && 0.0 & 0.0 & 0.0 \\
NCVMP && 0.1 & 0.1 & 0.2 \\ \hline
\multirow{3}{*}{SVI (a)} & 1&  83.2  $\pm$ 25.9 & 93.3 $\pm$ 26.4 & 116.9  $\pm$  39.6 \\
& 5 & 106.1  $\pm$  20.3 & 102.2  $\pm$ 25.3 & 102.7  $\pm$  11.2 \\
& 20 & 118.6  $\pm$  33.9 & 131.7 $\pm$ 41.5 & 136.2  $\pm$  44.1 \\ \hline
\multirow{3}{*}{SVI (b)} & 100 & 10.9 $\pm$ 1.4 (81, 124) & 22.8 $\pm$ 4.3 (197, 332) & 86.0 $\pm$ 15.0 (895, 1444) \\
& 200 & 18.2 $\pm$ 2.9 (89, 139) & 40.4 $\pm$ 7.9 (218, 381) & 136.3 $\pm$ 11.6 (758, 1045) \\
& 500 & 40.0 $\pm$ 6.3 (98, 141) & 88.6 $\pm$ 18.1 (230, 378) & 332.2 $\pm$ 57.5 (1062, 1674) \\ \hline
\end{tabular}
\end{small}
\caption{\small Friends network. Computation times of various algorithms. For SVI (b), the range of the number of particles in $\mathbb{S}$ over ten runs is given in brackets.} \label{friends_computation_times}
\end{table}
Laplace is the fastest among methods relying on the adjusted pseudolikelihood, followed by NCVMP and CJ's method. For SVI (a), there is a gradual increase in computation times with the dimension of $\theta$. However, for SVI (b), the computation times increase quite sharply. Runtime for $\M_2$ ($p=4$) is about twice that of $\M_1$ ($p=3$) and runtime for $\M_3$ ($p=6$) is about eight times that of $\M_1$ ($p=3$). This is due to the high rate of simulating from the likelihood as can be seen from the drastic increase in the number of particles in $\mathbb{S}$. This phenomenon is due to the curse of dimensionality; a large number of particles are required to cover the region in the parameter space where $q_\lambda(\theta)$ is practically non-zero when $\theta$ is high-dimensional. While there is a clear advantage in using SVI (b) when $p \leq 4$, SVI (a) may be computationally more efficient for $p > 6$. Figure \eqref{friends_ESS} shows the ESS for one instance of  
SVI (b) when $K=100$. 
\begin{figure}[htb!]
\centering
\includegraphics[width=0.9\textwidth]{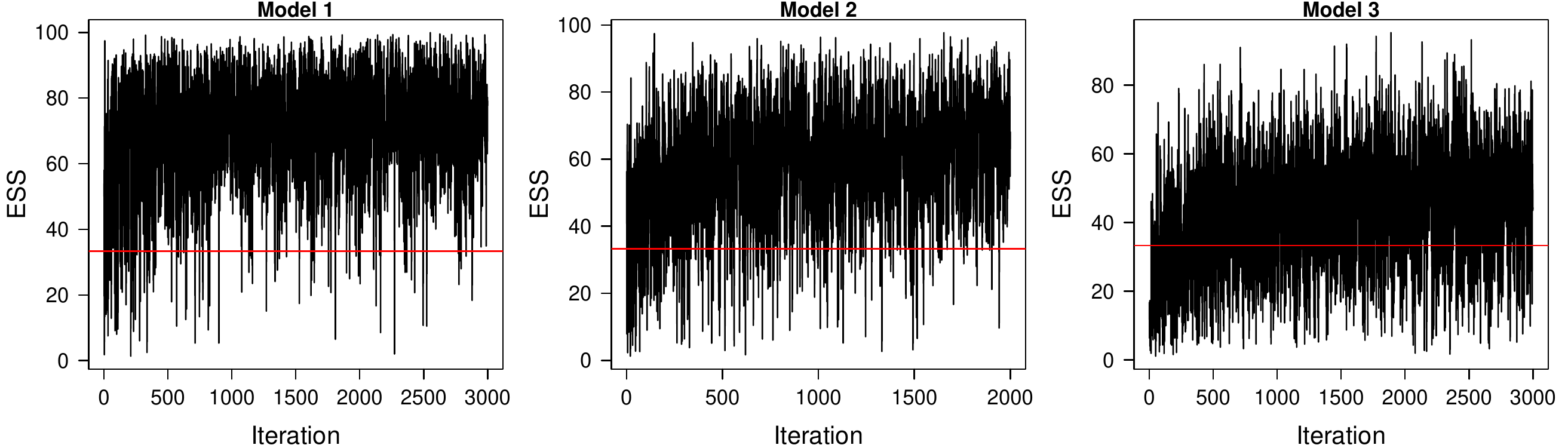}
\caption{\small Friends network. ESS at each iteration for a run of SVI (b) ($K=100$). Red horizontal line indicates threshold of $K/3$.} \label{friends_ESS}
\end{figure}
The number of particles in $\mathbb{S}$ are 112, 217 and 963 respectively. The ESS for $\M_3$ stills falls below the threshold at a high frequency even when the algorithm is close to convergence. This is because, when the parameter space is high dimensional, there is a high probability that a $\theta^{(t)}$ which is far away from any existing particles in $\mathbb{S}$ is generated.

Figure \ref{friends_KLdiv} compares the accuracy of the approximate marginal posterior of each $\theta_j$ relative to that estimated using the exchange algorithm.
\begin{figure}[htb!]
\centering
\includegraphics[width=\textwidth]{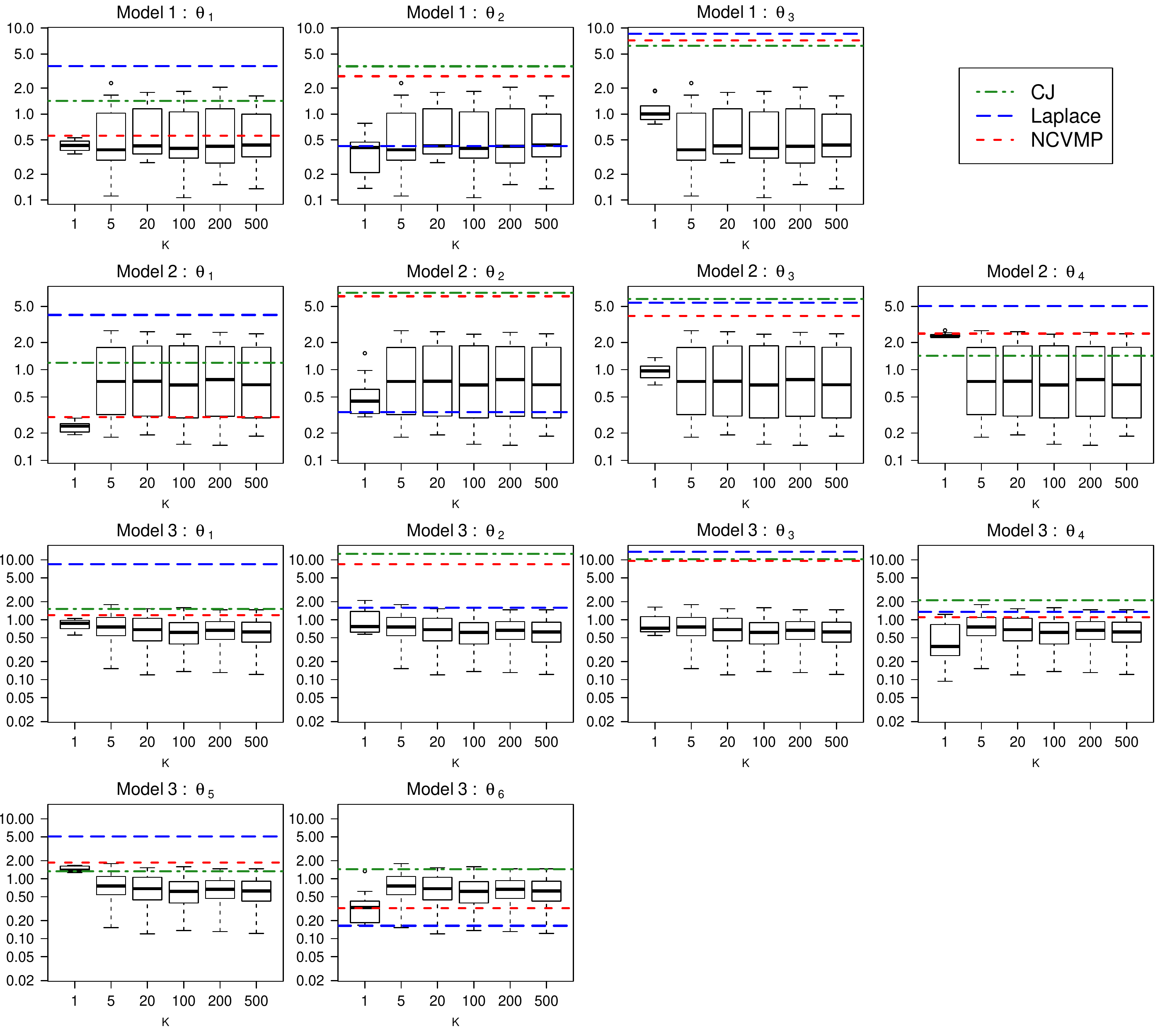}
\caption{\small Friends network. KL divergence of approximate marginal posterior of each $\theta_j$ from posterior estimated using exchange algorithm. In each figure, the first three and last three boxplots correspond to SVI (a) and SVI (b) respectively.} \label{friends_KLdiv}
\end{figure}
The KL divergence is generally low, indicating good approximations all around. There are some instances where the SVI algorithms does better than methods relying on the adjusted pseudolikelihood, such as $\theta_3$ of all three models. For SVI (a), taking $K \geq 5$ seems to give more stable results while $K\geq100$ seems to be sufficient for SVI (b).

Plots of the marginal posterior distributions in Figure \ref{friends_posterior}
\begin{figure}[htb!]
\centering
\includegraphics[width=0.9\textwidth]{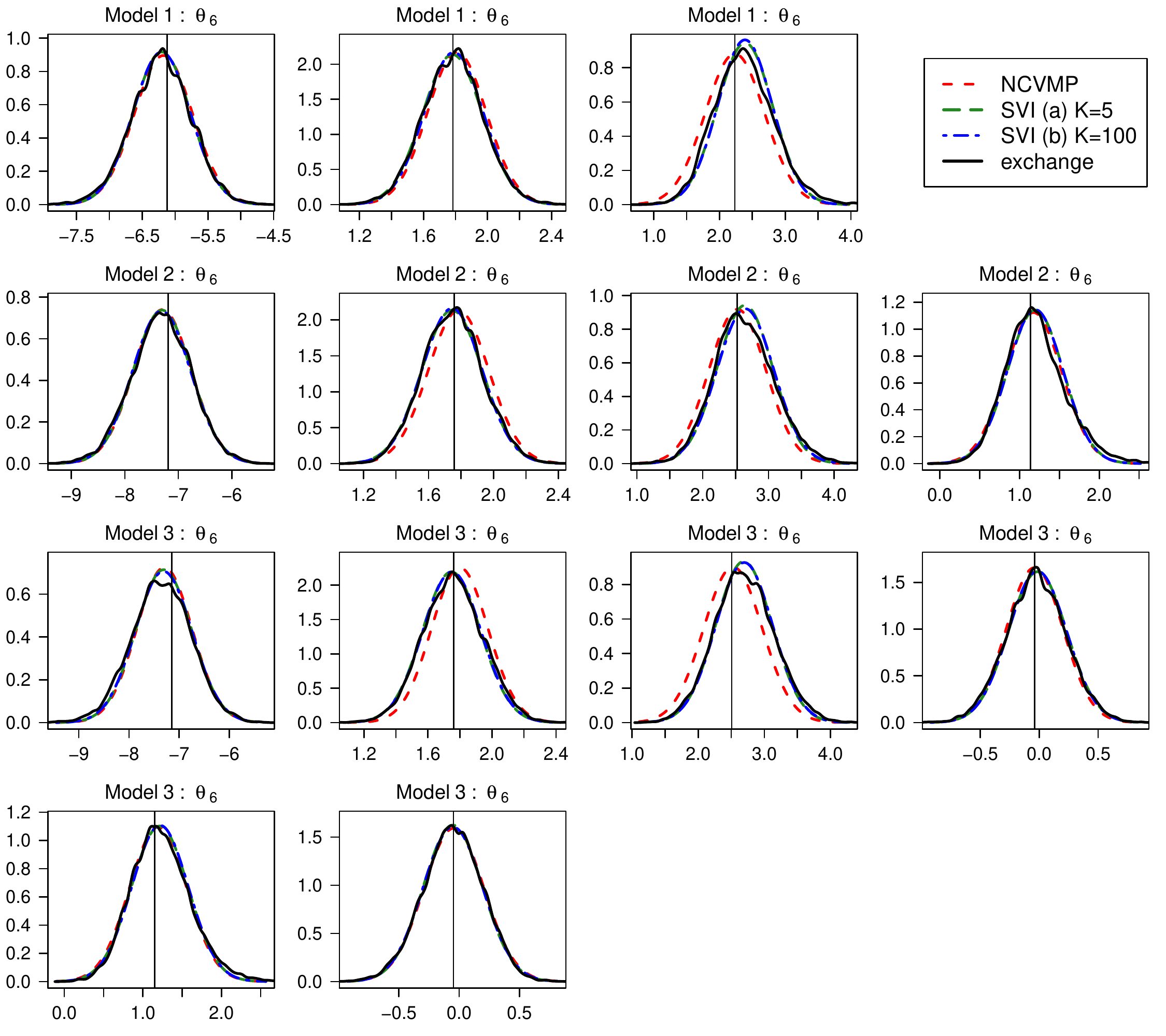}
\caption{\small Friends network. Marginal posterior distributions of each $\theta_j$ from variational and exchange algorithms. Vertical lines indicate values of $\hat{\theta}_\ML$.} \label{friends_posterior}
\end{figure}
confirm the above observations. The approximate marginal posteriors are generally very close to that estimated by the exchange algorithm. There is slight overestimation of the posterior mean of $\theta_2$ and underestimation of the posterior mean of $\theta_3$ in all three models by NCVMP. The tendency of NCVMP to lock on to $\hat{\theta}_\ML$ is still observed.

Estimates of the IWLBs and log marginal likelihood by CJ's method are shown in Table \ref{friends_IWLB}. 
\begin{table}[htb!]
\centering
\begin{small}
\begin{tabular}{ll|ccc|ccc}
\hline
Method & $K$ & \multicolumn{3}{c}{Computation time (seconds)}  & \multicolumn{3}{c}{V} \\
&& $\mathcal{M}_1$ &  $\mathcal{M}_2$ & $\mathcal{M}_3$ & $\mathcal{M}_1$ &  $\mathcal{M}_2$ & $\mathcal{M}_3$\\
\hline
CJ && -235.5 & -231.8 & -239.5 \\
Laplace & & -235.5 (1.0) & -231.8 (1.1) & -239.5 (1.7) & 100 & 100 & 100 \\
NCVMP & & -235.5 (1.0) & -231.8 (1.2) & -239.5 (1.8) & 100 & 100 & 100 \\
\multirow{3}{*}{SVI (a)} & 1 & -235.5 (3.7) & -231.6 (3.7) & -239.0 (5.9) & (100, 150) & (100, 150) & (200, 250) \\
& 5 & -235.5 (3.7) & -231.6 (3.5) & -239.0 (6.4) & (100, 150) & (100, 150) & (200, 350) \\
& 20 & -235.5 (3.1) & -231.6 (3.7) & -239.0 (5.7) & (100, 150) & (100, 150) & (200, 250) \\
\multirow{3}{*}{SVI (b)} & 100 & -235.5 (3.4) & -231.6 (3.3) & -239.0 (5.9) & (100, 150) & 150 & (200, 250) \\
& 200 & -235.5 (3.6) & -231.6 (3.5) & -239.0 (5.7) & (100, 150) & 150 & (200, 250) \\
& 500 & -235.5 (3.4) & -231.6 (3.5) & -239.0 (5.9) & (100, 150) & (100, 150) & (200, 250) \\
\hline
\end{tabular}
\end{small}
\caption{\small Friends network. IWLB computed using (I) adjusted pseudolikelihood for Laplace and NCVMP and (II) Monte Carlo for SVI algorithms. Computation times (in brackets) and range of values of $V$.} \label{friends_IWLB}
\end{table}
The IWLBs estimated using Laplace approximation and NCVMP based on approach (I) totally agree with CJ's method. However, there are some minor discrepancies in the IWLBs estimated using the SVI algorithms based on approach (II) with CJ's method; the IWLBs are slightly higher for $\M_2$ and $\M_3$. Using $\M_1$ as reference, the Bayes factor $B_{21}$ ranges between 40.4--49.4 while $B_{31}$ ranges between 0.02--0.03. Hence, the preferred model is $\M_2$ and we conclude that the observed network can be explained by the homophily effect of drug usage but not that of sports and smoking.

\subsection{E. coli network}
Here we consider the E.coli transcriptional regulation network \citep{Shen2002} based on the RegulonDB data \citep{Salgado2001}. This biological network is available as {\tt data(ecoli)} from the {\tt ergm} R package, and has been analyzed using ERGMs by \cite{Saul2007} and \cite{Hummel2012} among others. The undirected version of the network has 419 nodes representing operons and 519 edges representing regulating relationships. 

For estimating the parameters in the adjusted pseudolikelihood, we use $10^5$ auxiliary iterations and a thinning factor of 1000 to simulate 1000 samples from $p(y|\hat{\theta}_\ML)$. This computation took 4.5, 3.2 and 5.6 seconds for $\M_1$, $\M_2$ and $\M_3$ respectively. Next, we fit the three models in \eqref{models} using the exchange algorithm, CJ's method, Laplace approximation and the variational algorithms. For the exchange algorithm, we also use $10^5$ auxiliary iterations. The length of burn-in is set as 1000 and the number of iterations per chain excluding burn-in is 10000. Thus we have 40000 samples from 4 chains for $\M_1$ and $\M_2$, and 60000 samples from 6 chains for $\M_3$. For this network, we were unable to tune $\gamma$ so that the acceptance rates fall between 20--25\% using the default value of 0.0025 for $\sigma_\epsilon$ in the {\tt bergm} function. After repeated tries, the average acceptance rates are 22.1\%, 22.9\%, 23.4\% if we set $\sigma_\epsilon$ as 0.002, 0.002, 0.0015 and $\gamma$ as 1.0, 0.1, 0.1 for $\M_1$, $\M_2$ and $\M_3$ respectively. For CJ's method, the acceptance rates for $\M_1$, $\M_2$ and $\M_3$ after tuning are 23.0\%, 22.9\% and 24.1\% respectively. 

Computation times for the various algorithms are shown in Table \ref{ecoli_computation_times}. As before, methods relying on the adjusted pseudolikelihood are the fastest followed by SVI (b), SVI (a) and lastly the exchange algorithm. SVI (b) converged very fast even for this large network, as the number of particles in $\mathbb{S}$ is quite small and simulation from the likelihood is significantly reduced compared to SVI(a). The number of particles in $\mathbb{S}$ for $\M_3$ is about 3--4 times the number for $\M_1$ and $\M_2$. Hence, the speedup of SVI (b) as compared to SVI (a) is typically lower when $\theta$ is higher in dimension. 
\begin{table}[tb!]
\centering
\begin{small}
\begin{tabular}{ll|cccl}
\hline
Method & $K$ & \multicolumn{3}{c}{Computation time (seconds)}  \\
&& $\mathcal{M}_1$ &  $\mathcal{M}_2$ & $\mathcal{M}_3$ \\ \hline
Exchange && 2719.8 & 809.7 & 8827.5 \\ \hline
CJ && 3.5 & 5.8 & 10.1 \\
Laplace && 0.0 & 0.0 & 0.0 \\
NCVMP && 0.0 & 0.2 & 0.4 \\ \hline
\multirow{3}{*}{SVI (a)} & 1&  83.0  $\pm$  0.4 & 80.9  $\pm$  26.3 & 116.4  $\pm$  24.4 \\
& 5 & 85.3  $\pm$  0.4 & 76.5  $\pm$  16.6 & 104.1  $\pm$  0.1 \\
& 20 & 93.7  $\pm$  0.3 & 79.4  $\pm$  23.1 & 120.6  $\pm$  18.1 \\ \hline
\multirow{3}{*}{SVI (b)} & 100 &  2.2 $\pm$ 0.2 (18, 24) & 2.2 $\pm$ 0.2 (27, 34) & 8.3 $\pm$ 0.6 (77, 92) \\
& 200 & 3.1 $\pm$ 0.3 (19, 28) & 2.9 $\pm$ 0.3 (28, 38) & 13.2 $\pm$ 1.4 (81, 110) \\
& 500 & 5.2 $\pm$ 0.5 (20, 29) & 5.1 $\pm$ 0.4 (29, 40) & 27.8 $\pm$ 2.8 (97, 128) \\ \hline
\end{tabular}
\end{small}
\caption{\small E. coli network. Computation times of various algorithms. For SVI (b), the range of the number of particles in $\mathbb{S}$ over ten runs is given in brackets.} \label{ecoli_computation_times}
\end{table}

Figure \ref{ecoli_KLdiv} shows how close the approximate marginal posteriors of each $\theta_i$ are in KL divergence to that estimated using the exchange algorithm. The methods relying on adjusted pseudolikelihood (NCVMP, Laplace and CJ) have similar performance and there is often no clear winner, although NCVMP performs consistently better than Laplace and CJ for $\M_3$. The SVI algorithms perform slightly better than NCVMP, Laplace and CJ for $\theta_2$ of $\M_1$ and $\{\theta_1, \theta_3\}$ of $\M_3$, and significantly better for $\M_2$. There are a couple of outliers from SVI (b) for $K=100$ and $K \geq 200$ seems to be more stable.
\begin{figure}[tb!]
\centering
\includegraphics[width=\textwidth]{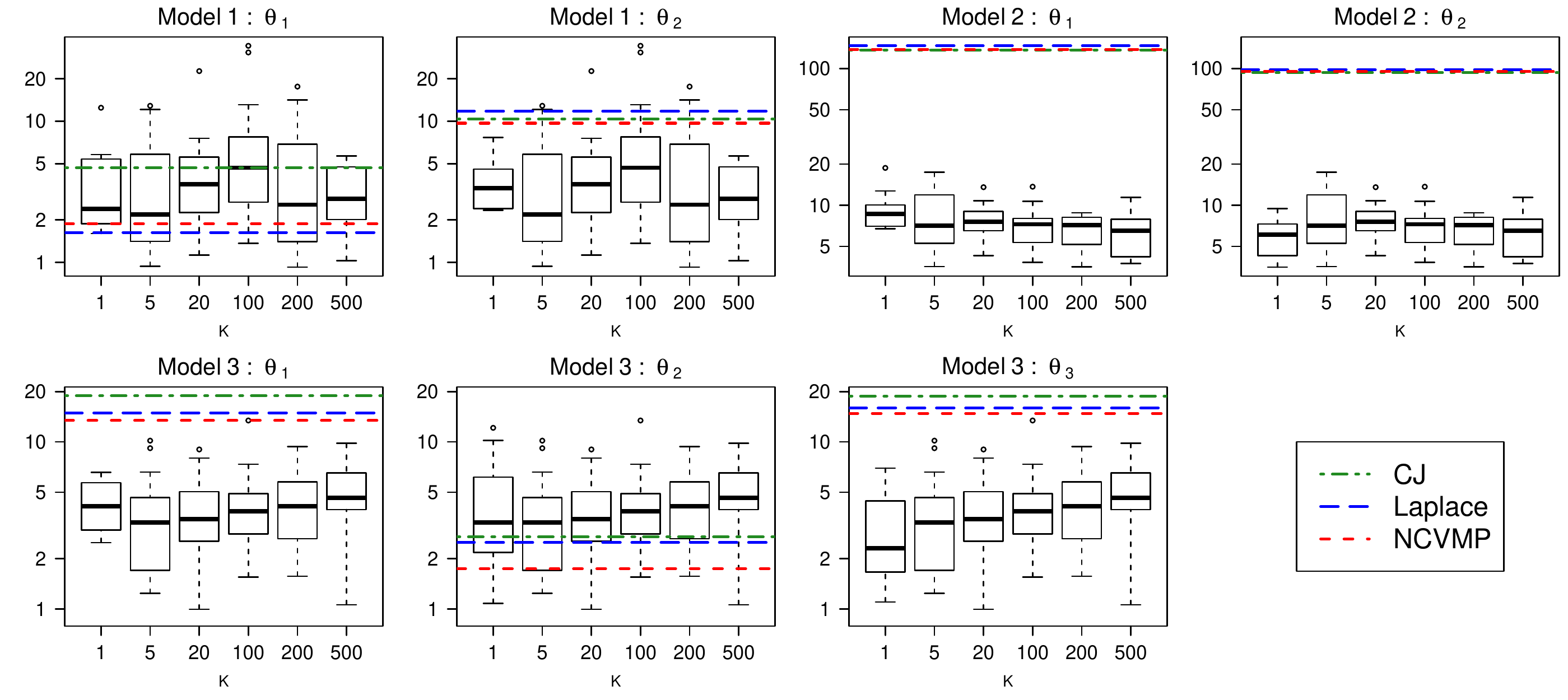}
\caption{\small E. coli network. KL divergence of approximate marginal posterior of each $\theta_j$ from posterior estimated using exchange algorithm. In each figure, the first three and last three boxplots correspond to SVI (a) and SVI (b) respectively.} \label{ecoli_KLdiv}
\end{figure}

Figure \ref{ecoli_posterior} shows the marginal posterior distributions from NCVMP and one instance of SVI (a) ($K=5$) and SVI (b) ($K=200$). There is slight overestimation of the posterior variance for $\theta_2$ of $\M_1$ and $\{\theta_1, \theta_3\}$ of $\M_3$. For $\M_2$, the posterior mean and variance of $\theta_1, \theta_2$ are not captured accurately. NCVMP appears to lock on to $\hat{\theta}_\ML$ too tightly again.
\begin{figure}[htb!]
\centering
\includegraphics[width=0.9\textwidth]{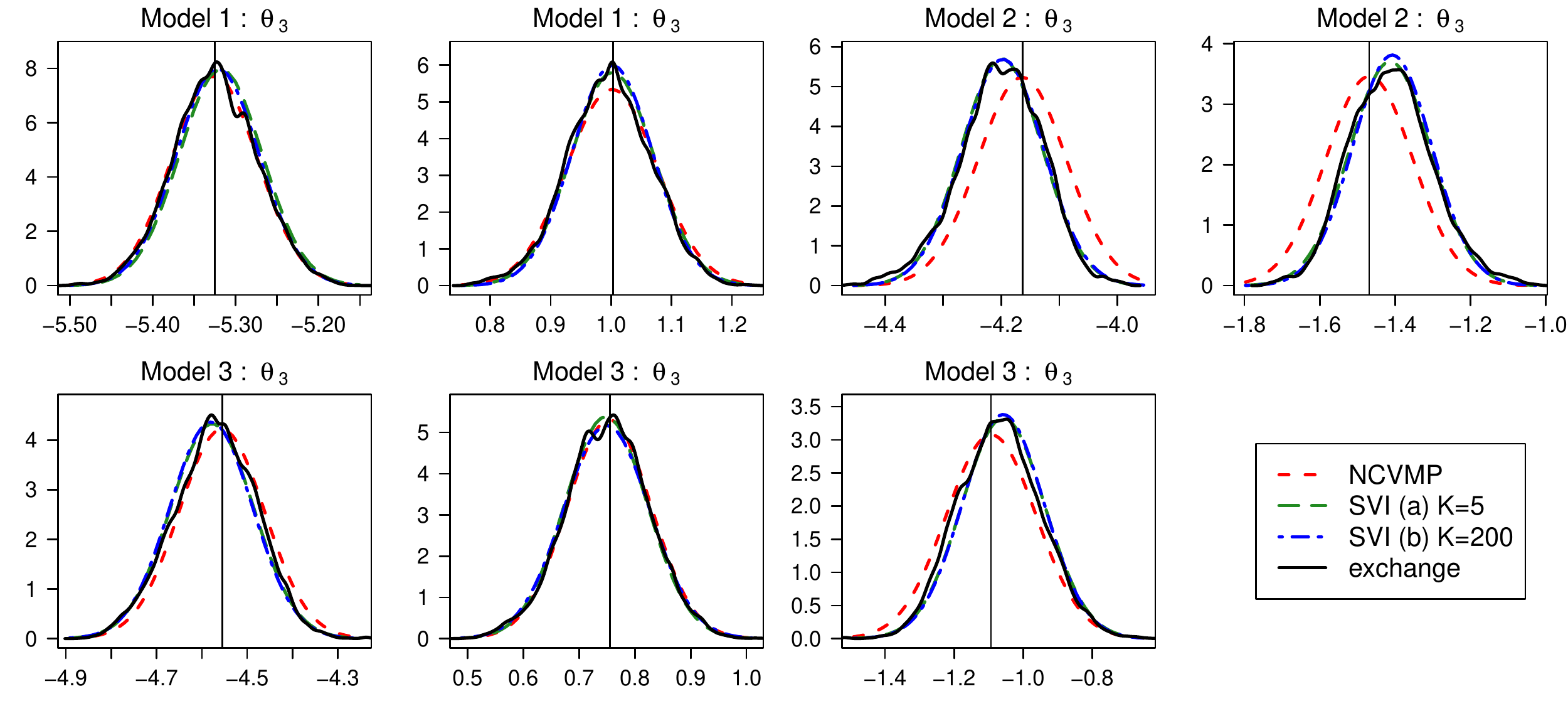}
\caption{\small E. coli network. Marginal posterior distributions of each $\theta_j$ from variational and exchange algorithms. Vertical lines indicate values of $\hat{\theta}_\ML$.} \label{ecoli_posterior}
\end{figure}
\begin{figure}[tb!]
\centering
\includegraphics[width=\textwidth]{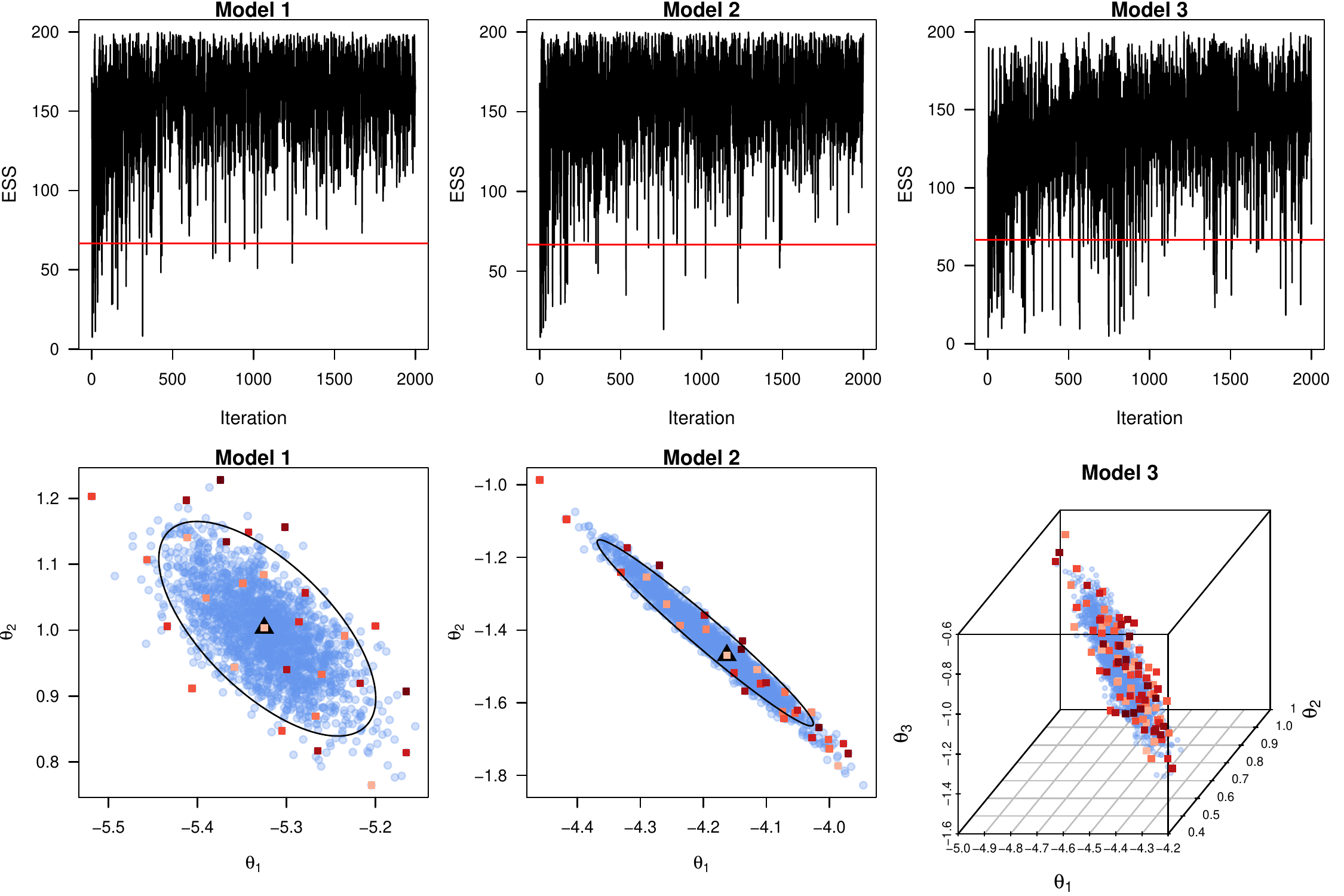}
\caption{\small E. coli network. First row shows the ESS at each iteration for a run of SVI (b) ($K=200)$. Second row show plots of $\{\theta^{(t)}\}$ (in blue circles) and particles in $\mathbb{S}$ (red squares). Maximum likelihood estimate is marked using a black triangle. Ellipse in black is the normal density contour that contain 95\% of the probability of the estimated $q_\lambda(\theta)$.} \label{ecoli_ESS_particles}
\end{figure}
Figure \ref{ecoli_ESS_particles} illustrates the performance of one run of SVI (b) ($K=200)$ for each of the three models. The number of particles in $\mathbb{S}$ are 28, 31 and 97 respectively. For $\M_1$, most of the particles in $\mathbb{S}$ are added within the first 500 iterations and hence convergence will be fairly rapid after that. For the higher dimensional $\M_3$, $\mathbb{S}$ was still expanding after the first 1000 iterations. The second row shows the spread of the particles in $\mathbb{S}$. Particles colored in lighter shades of red are added to $\mathbb{S}$ earlier. Particles which are added later (darker shade of red) have a higher tendency to appear on the boundary for $\M_1$ and $\M_2$.

Table \ref{ecoli_IWLB} shows the IWLBs and log marginal likelihood estimated using CJ's method. The IWLB from Laplace and NCVMP estimated using approach (I) are identical to CJ's method. For the SVI algorithms, the IWLB estimated using approach (II) are also identical to CJ's method for $\M_1$ and $\M_2$, but is slightly higher for $\M_3$. Only a small $V$ of up to 100 was required for the IWLB algorithm to converge. The preferred model in this case is $\M_3$, which contains both the gwesp and gwd terms.
\begin{table}[htb!]
\centering
\begin{small}
\begin{tabular}{ll|ccc|ccc}
\hline
Method & $K$ & \multicolumn{3}{c}{Computation time (seconds)}  & \multicolumn{3}{c}{V} \\
&& $\mathcal{M}_1$ &  $\mathcal{M}_2$ & $\mathcal{M}_3$ & $\mathcal{M}_1$ &  $\mathcal{M}_2$ & $\mathcal{M}_3$\\
\hline
CJ && -3123.8 & -3130.6 & -3097.5 \\
Laplace & & -3123.8 (0.2) & -3130.6 (0.7) & -3097.5 (3.2) & 50 & 50 & 100 \\
NCVMP & & -3123.8 (0.1) & -3130.6 (0.8) & -3097.5 (1.6) & 50 & 50 & 50 \\ \hline
\multirow{3}{*}{SVI (a)}  & 1 & -3123.8 (1.8) & -3130.6 (2.7) & -3097.2 (2.7) & (50, 100) & 100 & 100 \\
& 5 & -3123.8 (1.6) & -3130.6 (2.7) & -3097.2 (2.7) & (50, 100) & 100 & 100 \\
& 20 & -3123.8 (1.6) & -3130.6 (2.7) & -3097.2 (2.7) & (50, 100) & 100 & 100 \\ \hline
\multirow{3}{*}{SVI (b)}  & 100 & -3123.8 (1.9) & -3130.6 (2.7) & -3097.2 (2.7) & (50, 100) & 100 & 100 \\ 
& 200 & -3123.8 (1.8) & -3130.6 (2.7) & -3097.2 (2.7) & (50, 100) & 100 & 100 \\
& 500 & -3123.8 (1.3) & -3130.6 (2.7) & -3097.2 (2.7) & 50 & 100 & 100 \\
\hline
\end{tabular}
\end{small}
\caption{\small E. coli network. IWLB computed using (I) adjusted pseudolikelihood for Laplace and NCVMP and (II) Monte Carlo for SVI algorithms. Computation times (in brackets) and range of values of $V$.} \label{ecoli_IWLB}
\end{table}

 \section{Conclusion} \label{sec:Conclu}
In this article, we have proposed several variational methods for obtaining Bayesian inference for the ERGM. The first approach is an NCVMP algorithm which approximates the likelihood using an adjusted pseudolikelihood. NCVMP is extremely fast and stable as it considers deterministic updates. Comparing NCVMP with Laplace approximation, which is also deterministic and yields a Gaussian approximation, the performance of the two approaches are quite similar in many cases. Sometimes NCVMP is more accurate than Laplace and sometimes it is the other way around, but both approaches are dependent on the adjusted pseudolikelihood and can only provide good posterior approximations when the adjusted pseudolikelihood is able to mimic the true likelihood well. NCVMP has a tendency to lock on to $\hat{\theta}_\ML$ too tightly even when the true posterior mean deviates from $\hat{\theta}_{\ML}$, although the approximation is still very close to results from the exchange algorithm in many cases. In the second approach, we develop a SVI algorithm. As simulating from the likelihood is very time consuming, we estimate the gradients in two ways (a) using a Monte Carlo estimate based on a small number of samples and (b) adaptive SNIS. The SVI algorithms are very fast and yield posterior approximations which are very close to results from the exchange algorithm in all our experiments. For SVI (a), only a small number of simulations ($K \approx 5$) are required at each iteration. SVI (b) is much faster than SVI (a) for low-dimensional problems ($p \leq 4$) but the computational advantage becomes smaller as the dimension of $\theta$ increases. A collection of particles and associated sufficient statistics has to be stored for SVI (b), but only a small number of simulations (100 or 200) per particle is required for networks considered in this article. Using the variational or Laplace approximation, we can also compute an importance weighted lower bound, which is identical to the log marginal likelihood estimate from CJ's method in our experiments when (I) the adjusted pseudolikelihood is used. This can be useful in model selection when a large number of candidate models are compared, as unlike sampling-based methods, tuning of parameters or checking of diagnostic plots for convergence is not required. The IWLB based on (II) Monte Carlo estimate of log normalizing constant can also be useful when the adjusted pseudolikelihood is unable to mimic the true likelihood well. There remains many avenues open for exploration, such as the use of multiple or mixture importance sampling in computing gradient estimates in SVI (b) and improved criteria for assessing the performance of importance sampling estimates.

\section*{Acknowledgments}
Linda Tan was supported by the start-up grant R-155-000-190-133. The Insight Centre for Data Analytics is supported by Science Foundation Ireland under Grant Number SFI/12/RC/2289. We thank the editor, associate editor and reviewers for their comments which have greatly improved this manuscript.

\setcounter{section}{0} \renewcommand{\thesection}{S\arabic{section}}
\setcounter{figure}{0} \renewcommand{\thefigure}{S\arabic{figure}}
\setcounter{table}{0} \renewcommand{\thetable}{S\arabic{table}}
\setcounter{equation}{0} \renewcommand{\theequation}{S\arabic{equation}}

\newpage
\onehalfspacing
\begin{center}
{\Large\bf Supplementary material for ``Bayesian variational  \\
inference for exponential random graph models"}
\end{center}

\section{Gradient and Hessian of the ERGM log likelihood}
The log likelihood of the ERGM is
\begin{equation*}
\log p(y|\theta) = \theta^T s(y) - \log z(\theta).
\end{equation*}
The gradient is given by 
\begin{equation*}
\begin{aligned}
 \nabla_\theta \log p(y|\theta) &= s(y) - \nabla_\theta z(\theta) /z(\theta) \\
& = s(y) - E_{y|\theta} [s(y)],
\end{aligned}
\end{equation*}
since
\begin{equation*} 
\frac{\nabla_\theta z(\theta)}{z(\theta)} 
= \frac{ \sum_{y \in \mathcal{Y}} \exp\{\theta^T s(y)\} s(y)}{z(\theta)} 
= \sum_{y \in \mathcal{Y}} p(y|\theta) s(y) = E_{y|\theta} [s(y)].
\end{equation*}
The Hessian is given by 
\begin{equation*}
\begin{aligned}
 \nabla_\theta^2 \log p(y|\theta) &= - \bigg\{ \frac{\nabla_\theta^2 z(\theta)}{z(\theta)} - \frac{\nabla_\theta z(\theta)}{z(\theta)} \frac{\nabla_\theta z(\theta)^T}{z(\theta)} \bigg\}\\
& = - \{E_{y|\theta} [s(y)s(y)^T] - E_{y|\theta} [s(y)]E_{y|\theta} [s(y)]^T\} \\
&= - \cov_{y|\theta} [s(y)].
\end{aligned}
\end{equation*}
since
\begin{equation*} 
\frac{\nabla_\theta^2 z(\theta)}{z(\theta)} 
= \frac{ \sum_{y \in \mathcal{Y}} \exp\{\theta^T s(y)\} s(y)s(y)^T}{z(\theta)} 
= \sum_{y \in \mathcal{Y}} p(y|\theta) s(y)s(y)^T = E_{y|\theta} [s(y)s(y)^T].
\end{equation*}

\section{Nonconjugate variational message passing}
Nonconjugate variational message passing is a fixed-point iteration method, which is sensitive to initialization and is not guaranteed to converge. However, it can also be interpreted as a natural gradient method with a step size of one and smaller steps can hence be used to alleviate convergence issues \citep{Tan2014}. 

\subsection{Natural gradient method}
The natural gradient of the lower bound $\L$ with respect to $\lambda$ can be obtained by pre-multiplying the (Euclidean) gradient by the inverse of the Fisher information matrix of $q_\lambda(\theta)$ \citep{Amari1998}, which is given by $\V(\lambda)$. Hence, from \eqref{euclidean_gradient}, the natural gradient is
\begin{equation*}
\begin{aligned}
\widetilde{\nabla}_\lambda \L &= \V(\lambda)^{-1} \nabla_\lambda \L  \\
&= \V(\lambda)^{-1}  \nabla_\lambda E_{q_\lambda} \{ \log p(\theta,y) \} - \lambda \\
&= \hat{\lambda} - \lambda,
\end{aligned}
\end{equation*}
where $\hat{\lambda}$ is the update used in nonconjugate variational message passing. If we replace the {Euclidean} gradient in gradient ascent by the natural gradient, then at the $t$th iteration, 
\begin{equation} \label{natural_gradient_step}
\begin{aligned}
\lambda^{(t+1)} &= \lambda^{(t)} + \rho_t (\hat{\lambda}^{(t)} -  \lambda^{(t)})  \\
&= (1-\rho_t) \lambda^{(t)} + \rho_t \hat{\lambda}^{(t)}.
\end{aligned}
\end{equation}
Thus nonconjugate variational message passing is just the special case in natural gradient ascent where the stepsize $\rho_t = 1$. 

For ERGMs, $\hat{\lambda}^{(t)}$ cannot be evaluated in closed form and we approximate it deterministically using Gauss-Hermite quadrature. This is unlike stochastic gradient ascent, where the gradient is a noisy but {\em unbiased estimate} of the true gradient, and the updates are guaranteed to converge to the true optimum provided the objective function and the stepsize satisfy certain regularity conditions. While Algorithm 1 does not enjoy such guarantees,  we are able to compute an approximation of the lower bound at each iteration of Algorithm 1, and use this as a means to assess whether the algorithm is converging towards a local maximum. If the lower bound fails to increase, we can attempt to resolve this issue by repeatedly halving the stepsize $\rho_t$ in \eqref{natural_gradient_step} or try some other initialization.

\begin{theorem}
The natural gradient ascent update for a multivariate Gaussian $q_\lambda(\theta)  = N(\mu, \Sigma)$. can be expressed as 
\begin{equation*}
\begin{aligned}
{\Sigma^{(t+1)}}^{-1} &= (1-\rho_t){\Sigma^{(t)}}^{-1} + \rho_t \vec^{-1} \bigg[- 2\frac{\partial \H}{\partial  \vec(\Sigma)}\bigg] \\
\mu^{(t+1)} &= \mu^{(t)} + \rho_t \Sigma^{(t+1)}\frac{\partial \H}{\partial  \mu},
\end{aligned}
\end{equation*}
where $\H = \log p(y, \theta)$. To apply damping, $0 < \rho_t < 1$ can be used.
\end{theorem}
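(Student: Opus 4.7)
The plan is to parametrize the multivariate Gaussian via its natural parameters
\[
\eta_1 = \Sigma^{-1}\mu, \qquad \eta_2 = -\tfrac{1}{2}\vec(\Sigma^{-1}),
\]
apply the damped natural gradient step \eqref{natural_gradient_step} componentwise in $(\eta_1,\eta_2)$, and then translate back to $(\mu,\Sigma)$. In the theorem's notation, the derivatives $\partial\H/\partial\mu$ and $\partial\H/\partial\vec(\Sigma)$ should be read as the gradients of $E_{q_\lambda}\{\H\}$ with respect to the variational parameters, which by Bonnet's and Price's identities coincide with the expectations of the $\theta$-derivatives of $\H$.

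First, I would identify the NCVMP target $\hat\lambda^{(t)}$ from \eqref{ncvmpupdate} in natural coordinates. Rewriting the NCVMP updates gives
\[
\hat\Sigma^{-1,(t)} = -2\,\vec^{-1}\!\bigl[\partial\H/\partial\vec(\Sigma)\bigr],\qquad \hat\mu^{(t)} = \mu^{(t)} + \hat\Sigma^{(t)}(\partial\H/\partial\mu),
\]
so $\hat\eta_2^{(t)} = \partial\H/\partial\vec(\Sigma)$ and $\hat\eta_1^{(t)} = \hat\Sigma^{-1,(t)}\mu^{(t)} + \partial\H/\partial\mu$. Substituting the expression for $\hat\eta_2^{(t)}$ into \eqref{natural_gradient_step} and multiplying through by $-2$ gives the first claim
\[
\Sigma^{-1,(t+1)} = (1-\rho_t)\Sigma^{-1,(t)} + \rho_t\,\vec^{-1}\!\bigl[-2\,\partial\H/\partial\vec(\Sigma)\bigr],
\]
and in particular the identity $\Sigma^{-1,(t+1)} = (1-\rho_t)\Sigma^{-1,(t)} + \rho_t\hat\Sigma^{-1,(t)}$, which I will call $(*)$.

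Next, applying \eqref{natural_gradient_step} to $\eta_1$ yields
\[
\Sigma^{-1,(t+1)}\mu^{(t+1)} = (1-\rho_t)\Sigma^{-1,(t)}\mu^{(t)} + \rho_t\hat\Sigma^{-1,(t)}\mu^{(t)} + \rho_t(\partial\H/\partial\mu).
\]
The first two terms on the right collapse via $(*)$ to $\Sigma^{-1,(t+1)}\mu^{(t)}$, and pre-multiplying by $\Sigma^{(t+1)}$ delivers the second claim. The main subtlety, and the reason the $\mu$-update is cleanly expressed with $\Sigma^{(t+1)}$ rather than $\Sigma^{(t)}$, is precisely this coupling: the natural gradient step is linear in $(\eta_1,\eta_2)$, but the pullback to $(\mu,\Sigma)$ is nonlinear, and the convex combination of precision matrices produced by the $\eta_2$-step is exactly the precision matrix needed to invert $\eta_1^{(t+1)}$ back to $\mu^{(t+1)}$. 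Once this alignment is recognised, the rest of the algebra is routine.
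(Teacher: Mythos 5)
Your proof is correct and follows essentially the same route as the paper: express the damped natural-gradient step as a convex combination in the Gaussian natural parameters $\bigl(\Sigma^{-1}\mu,\,-\tfrac{1}{2}\vec(\Sigma^{-1})\bigr)$, read off the precision update, and then use that identity to collapse the mean update to $\mu^{(t+1)}=\mu^{(t)}+\rho_t\Sigma^{(t+1)}\partial\H/\partial\mu$. The only difference is cosmetic: you work in the full $\vec$ parametrization, whereas the paper states $\hat{\lambda}$ in $\vech$ coordinates with the duplication matrix $D_p$ and then eliminates it, so your version skips that bookkeeping but the algebra is otherwise identical.
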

\begin{proof}
From \cite{Tan2013}, we can write $q_\lambda(\theta)$ in the form of a exponential family distribution as
\begin{equation*}
q_\lambda(\theta_i)=\exp\{\lambda^T t(\theta)-h(\lambda)\},
\end{equation*}
where 
\begin{equation*}
\lambda = \begin{bmatrix}  -\frac{1}{2}D_p^T \vec({\Sigma}^{-1}) \\ {\Sigma}^{-1}\mu \end{bmatrix}, \quad 
t(\theta)  = \begin{bmatrix} \vech(\theta\theta^T)\\
\theta \end{bmatrix} 
\end{equation*}
and $h(\lambda)= \frac{1}{2} \mu^T\Sigma^{-1}\mu +\frac{1}{2}\log|\Sigma|+\frac{p}{2}\log(2\pi)$. The $p^2 \times p(p+1)/2$ duplication matrix $D_p$ is such that  $D_p \vech(A)=\text{vec}(A)$ if $A$ is symmetric. Let $D_p^+$ denote the Moore-Penrose inverse of $D_p$. \cite{Tan2013} showed that the NCVMP update is 
\begin{equation*}
\hat{\lambda} = \begin{bmatrix}
D_p^T  & 0 \\
-2(\mu^T \otimes I){D_p^+}^T \negthinspace D_p^T & I 
\end{bmatrix}
\begin{bmatrix}
\frac{\partial \H}{\partial  \text{vec}(\Sigma)} \\ 
\frac{\partial \H}{\partial  \mu} \end{bmatrix}.
\end{equation*}  
Now $\lambda^{(t+1)} = (1-\rho_t) \lambda^{(t)} + \rho_t \hat{\lambda}^{(t)}$ implies
\begin{equation*}
\begin{bmatrix}  -\frac{1}{2}D_p^T \text{vec}({\Sigma^{(t+1)}}^{-1}) \\ {\Sigma^{(t+1)}}^{-1}\mu^{(t+1)} \end{bmatrix} = (1-\rho_t) \begin{bmatrix}  -\frac{1}{2}D_p^T \text{vec}({\Sigma^{(t)}}^{-1}) \\ {\Sigma^{(t)}}^{-1}\mu^{(t)} \end{bmatrix} + \rho_t \begin{bmatrix}
D_p^T \frac{\partial \H}{\partial  \text{vec}(\Sigma)} \\ -2({\mu^{(t)}}^T \otimes I){D_p^+}^T \negthinspace D_p^T \frac{\partial \H}{\partial  \text{vec}(\Sigma)} + \frac{\partial \H}{\partial  \mu}
\end{bmatrix}.
\end{equation*}
Note that $D_p^+ = (D_p D_p)^{-1} D_p^T$ and $D_pD_p^+ = (I_{p^2} + K_p)/2$, where $K_p$ is the commutation matrix such that $K_p \vec(A) = \vec(A^T)$. See \cite{Magnus1999}. Premultiplying the first line by $-2 D_p(D_p^TD_p)^{-1}$, we have 
\begin{equation*}
\begin{gathered}
D_pD_p^+ \text{vec}({\Sigma^{(t+1)}}^{-1}) = (1-\rho_t) D_pD_p^+ \text{vec}({\Sigma^{(t)}}^{-1}) -  2\rho_t D_pD_p^+ \frac{\partial \H}{\partial  \text{vec}(\Sigma)}, \\
\implies \text{vec}({\Sigma^{(t+1)}}^{-1}) = (1-\rho_t) \text{vec}({\Sigma^{(t)}}^{-1}) - 2\rho_t  \frac{\partial \H}{\partial  \text{vec}(\Sigma)}, \\
\implies {\Sigma^{(t+1)}}^{-1} = (1-\rho_t){\Sigma^{(t)}}^{-1} + \rho_t \vec^{-1} \bigg[- 2\frac{\partial \H}{\partial  \text{vec}(\Sigma)}\bigg].
\end{gathered}
\end{equation*}
For the second line,
\begin{equation*}
\begin{aligned}
&{\Sigma^{(t+1)}}^{-1}\mu^{(t+1)} = (1-\rho_t) {\Sigma^{(t)}}^{-1}\mu^{(t)} + \rho_t \bigg[\frac{\partial \H}{\partial  \mu} -2({\mu^{(t)}}^T \otimes I) \frac{\partial \H}{\partial  \text{vec}(\Sigma)} \bigg], \\
&\implies {\Sigma^{(t+1)}}^{-1}\mu^{(t+1)} = (1-\rho_t) {\Sigma^{(t)}}^{-1}\mu^{(t)} + \rho_t \frac{\partial \H}{\partial  \mu} \\
&\hspace{30mm} \qquad \qquad +  ({\mu^{(t)}}^T \otimes I)\bigg[ \text{vec}({\Sigma^{(t+1)}}^{-1}) - (1-\rho_t) \text{vec}({\Sigma^{(t)}}^{-1}) \bigg], \\
&\implies {\Sigma^{(t+1)}}^{-1}\mu^{(t+1)} = (1-\rho_t) {\Sigma^{(t)}}^{-1}\mu^{(t)} + \rho_t \frac{\partial \H}{\partial  \mu} + {\Sigma^{(t+1)}}^{-1} \mu^{(t)} - (1-\rho_t) {\Sigma^{(t)}}^{-1} \mu^{(t)}, \\
& \implies\mu^{(t+1)}= (1-\rho_t) \Sigma^{(t+1)} {\Sigma^{(t)}}^{-1}\mu^{(t)} + \rho_t \Sigma^{(t+1)}\frac{\partial \H}{\partial  \mu} +\mu^{(t)} - (1-\rho_t) \Sigma^{(t+1)}{\Sigma^{(t)}}^{-1} \mu^{(t)}, \\
&\implies\mu^{(t+1)} = \mu^{(t)} + \rho_t \Sigma^{(t+1)}\frac{\partial \H}{\partial  \mu}.
\end{aligned}
\end{equation*}
\end{proof}

\subsection{Lower bound using pseudolikelihood as plug-in}

Using the adjusted pseudolikelihood $\tilde{f}(y|\theta)$ as a plug-in for the true likelihood $p(y|\theta)$, 
\begin{equation} \label{logjoint}
\begin{aligned}
\log \tilde{p}(y, \theta) &= \log M + \sum_{(i,j) \in \D} \{ y_{ij} \delta_s(y)_{ij}^Tg(\theta)  -b[ \delta_s(y)_{ij}^Tg(\theta) ] \} \\
&\quad -  \tfrac{p}{2}\log (2\pi) -  \tfrac{1}{2}\log|\Sigma_0| - \tfrac{1}{2}(\theta -  \mu_0)^T \Sigma_0^{-1}( \theta - \mu_0) ,
\end{aligned}
\end{equation}
where $g(\theta) = \hat{\theta}_{\PL} + W(\theta - \hat{\theta}_{\ML})$ and $b(x) = \log\{1+\exp(x)\}$. Let 
\begin{equation*}
\begin{aligned}
\delta_s(y)_{ij}^Tg(\theta) &= \delta_s(y)_{ij}^T \{( \hat{\theta}_{\PL} - W \hat{\theta}_{\ML}) + W\theta \} \\
&= \alpha_{ij} + \beta_{ij}^T \theta,
\end{aligned}
\end{equation*}
where $\alpha_{ij} = \delta_s(y)_{ij}^T ( \hat{\theta}_{\PL}  - W\hat{\theta}_{\ML})$ and $\beta_{ij} = W^T \delta_s(y)_{ij}$. Then
\begin{equation*}
\begin{aligned}
E_{q_\lambda} \{\log \tilde{p}(y, \theta) \} &= \log M + \sum_{(i,j) \in \D} [y_{ij} (\alpha_{ij} + \beta_{ij}^T \mu)  - E_{q_\lambda} \{b(\alpha_{ij} + \beta_{ij}^T \theta) \}] \\
& \quad -  \tfrac{p}{2}\log (2\pi) - \tfrac{1}{2}\log|\Sigma_0| -  \tfrac{1}{2}(\mu - \mu_0)^T \Sigma_0^{-1}(\mu - \mu_0) -  \tfrac{1}{2}\tr(\Sigma_0^{-1} \Sigma).
\end{aligned}
\end{equation*}
As 
\begin{equation*}
E_{q_\lambda} \{ \log q_\lambda (\theta) \} = - \tfrac{p}{2}\log (2\pi) - \tfrac{1}{2}\log|\Sigma| - \tfrac{p}{2},  
\end{equation*}
the approximate lower bound is given by
\begin{equation*}
\begin{aligned}
\tilde{\L} & = E_{q_\lambda} \{\log \tilde{p}(y, \theta) - \log q_\lambda (\theta) \}\\
&= \log M +  \sum_{(i,j) \in \D} [ y_{ij} (\alpha_{ij} + \beta_{ij}^T\mu )- E_{q_\lambda}\{ b(\alpha_{ij} +\beta_{ij}^T\theta) \} ] - \frac{1}{2} \log|\Sigma_0|\\
&\quad - \tfrac{1}{2}(\mu - \mu_0)^T \Sigma_0^{-1}(\mu- \mu_0) - \tfrac{1}{2} \tr(\Sigma_0^{-1} \Sigma) +\tfrac{1}{2} \log|\Sigma| + \tfrac{p}{2}.
\end{aligned}
\end{equation*}

\subsection{Gradients in NCVMP}
We can write
\begin{equation*}
\begin{aligned}
E_{q_\lambda} \{\log \tilde{p}(y, \theta) \} &= \log M + \sum_{(i,j) \in \D} [y_{ij} m_{ij} - B^{(0)}(m_{ij}, v_{ij})] -  \tfrac{p}{2}\log (2\pi)\\
& \quad  - \tfrac{1}{2}\log|\Sigma_0| -  \tfrac{1}{2}(\mu - \mu_0)^T \Sigma_0^{-1}(\mu - \mu_0) -  \tfrac{1}{2}\tr(\Sigma_0^{-1} \Sigma).
\end{aligned}
\end{equation*}
Let $d$ denote the differential operator. Differentiating w.r.t. $\mu$, 
\begin{equation*}
\begin{gathered}
d E_{q_\lambda} \{\log \tilde{p}(y, \theta) \} =  \sum_{(i,j) \in \D} [y_{ij} - \int_{-\infty}^\infty b^{(1)} (v_{ij}z + m_{ij}) \phi(z|0,1) dz] \beta_{ij}^T d\,\mu - (\mu - \mu_0)^T \Sigma_0^{-1}d\mu. \\
\implies \nabla_\mu E_{q_\lambda} \{\log \tilde{p}(y, \theta) \} = \sum_{(i,j) \in \D} [y_{ij}  - B^{(1)} (m_{ij}, v_{ij}) ] \beta_{ij} -\Sigma_0^{-1} (\mu - \mu_0).
\end{gathered}
\end{equation*}
Differentiating w.r.t. $\vec(\Sigma)$,
\begin{equation*}
\begin{aligned}
d \,E_{q_\lambda} \{\log \tilde{p}(y, \theta) \} &= - \sum_{(i,j) \in \D} \int_{-\infty}^\infty b^{(1)} (v_{ij}z + m_{ij}) \phi(z|0,1) z dz (d v_{ij})-  \tfrac{1}{2} \vec(\Sigma_0^{-1})^T d \vec(\Sigma).
\end{aligned}
\end{equation*} 
We have $2v_{ij} dv_{ij} = \vec(\beta_{ij} \beta_{ij})^T d\vec(\Sigma)$. Using integration by parts,
\begin{equation*}
\begin{aligned}
\int_{-\infty}^\infty b^{(1)}(v_{ij} z + m_{ij}) z\, \phi(z|0,1) dz 
&=  v_{ij}\int_{-\infty}^\infty b^{(2)}(v_{ij} z + m_{ij}) \phi(z|0,1) dz \\
&= v_{ij} B^{(2)}(m_{ij}, v_{ij}).
\end{aligned}
\end{equation*}
Hence
\begin{equation*}
\begin{gathered}
d \,E_{q_\lambda} \{\log \tilde{p}(y, \theta) \} = - \tfrac{1}{2} \sum_{(i,j) \in \D} B^{(2)}(m_{ij}, v_{ij}) \vec(\beta_{ij} \beta_{ij})^T d\vec(\Sigma)-  \tfrac{1}{2} \vec(\Sigma_0^{-1})^T d \vec(\Sigma) \\
\implies \nabla_{\vec(\Sigma)} E_{q_\lambda} \{\log \tilde{p}(y, \theta) \} =  - \tfrac{1}{2} \sum_{(i,j) \in \D} B^{(2)}(m_{ij}, v_{ij}) \vec(\beta_{ij} \beta_{ij}) -  \tfrac{1}{2} \vec(\Sigma_0^{-1}).
\end{gathered}
\end{equation*}

\section{Gradients of variational density}
We have 
\begin{equation*}
\log q_\lambda (\theta) = -\tfrac{p}{2} \log(2\pi) - \log|C| - \tfrac{1}{2} (\theta-\mu)^T C^{-T} C^{-1} (\theta - \mu).
\end{equation*}
Differentiating w.r.t. $\theta$,
\begin{equation*}
d\, \log q_\lambda (\theta)  = - (\theta-\mu)^T C^{-T} C^{-1} d\, \theta = - s^T C^{-1} d\,\theta.
\end{equation*}
Hence $\nabla_\theta \log q_\lambda (\theta) = C^{-T}s$. Similarly, $\nabla_\mu \log q_\lambda (\theta) = -C^{-T}s$. Differentiating w.r.t. $C$,
\begin{equation*}
\begin{aligned}
d\, \log q_\lambda (\theta)  &= - \tr(C^{-1}d\, C) + \tfrac{1}{2}s^T(C^{-1}d\, C)^T s+  \tfrac{1}{2}s^T (C^{-1}d\, C) s \\
& =- \vec(C^{-T})^T d\,\vec(C) + \vec(C^{-T}ss^T)^T d\, \vec(C) \\
&=  \vec(C^{-T}ss^T - CT{-T})^T E^T d\, \vech(C). \\
\therefore \nabla_{\vech(C)} \log q_\lambda (\theta)  &= E \vec(C^{-T}ss^T - C^{-T}) = \vech(C^{-T}ss^T - C^{-T}).
\end{aligned}
\end{equation*}
Here $E$ denotes the $p \times p$ elimination matrix \citep{Magnus1980}, which has the following properties, (i) $E\vec(A) = \vech(A)$ for any $p \times p$ matrix $A$ and (ii) $E^T \vech(A) = \vec(A)$ if $A$ is a $p \times p$ lower triangular matrix of order $p$.

\section{Laplace approximation}
We compare the variational methods with Laplace approximation which also approximates the posterior distribution using a Gaussian density.  Consider a second order Taylor approximation to $L(\theta) = \log p(y, \theta)$ at the posterior mode, $\hat{\theta}^* = \argmax_\theta L(\theta)$. We have
\begin{equation*}
\begin{aligned}
L(\theta) \approx L(\hat{\theta}^*) + \frac{1}{2} (\theta - \hat{\theta}^*)^T \nabla^2_\theta L(\hat{\theta}^*) (\theta - \hat{\theta}^*)
\end{aligned}
\end{equation*}
since $\nabla_\theta L(\hat{\theta}^*) = 0$ at the mode. Thus 
\begin{equation*}
\begin{aligned}
p(\theta|y) &\propto p(\theta, y) \\
&\mathrel{\dot\propto} \exp \left\{ \frac{1}{2} (\theta - \hat{\theta}^*)^T \nabla^2_\theta L(\hat{\theta}^*) (\theta - \hat{\theta}^*) \right\}.
\end{aligned}
\end{equation*}
Thus $p(\theta|y)$ can be approximated by $N(\hat{\theta}^*, - \{\nabla^2_\theta L(\hat{\theta}^*)\}^{-1})$. We use the adjusted pseudolikelihood $\tilde{f}(y|\theta)$ as a plug-in for the true likelihood $p(y|\theta)$. Let $\tilde{L}(\theta) = \log \tilde{p}(y, \theta)$ which is given in \eqref{logjoint}. Then 
\begin{equation*}
\begin{aligned}
d \tilde{L}(\theta) &= \sum_{(i,j) \in \D} \{ y_{ij}   -b'[ \delta_s(y)_{ij}^Tg(\theta) ] \}\delta_s(y)_{ij}^TWd\theta - (\theta -  \mu_0)^T \Sigma_0^{-1}d\theta. \\
\implies \nabla_\theta \tilde{L}(\theta) &= W^T \sum_{(i,j) \in \D} \{ y_{ij} - b'[ \delta_s(y)_{ij}^Tg(\theta) ] \} \delta_s(y)_{ij}  - \Sigma_0^{-1}( \theta - \mu_0).
\end{aligned}
\end{equation*}
\begin{equation*}
\begin{aligned}
d \nabla_\theta \tilde{L}(\theta) &= - W^T \sum_{(i,j) \in \D}  b''[ \delta_s(y)_{ij}^Tg(\theta) ] \delta_s(y)_{ij} \delta_s(y)_{ij}^TWd\theta - \Sigma_0^{-1} d\theta. \\
\implies \nabla_\theta^2 \tilde{L}(\theta) &= - W^T \sum_{(i,j) \in \D}  b''[ \delta_s(y)_{ij}^Tg(\theta) ] \delta_s(y)_{ij} \delta_s(y)_{ij}^T W- \Sigma_0^{-1}.
\end{aligned}
\end{equation*}
We find an estimate of the posterior mode by finding the zero of $\nabla_\theta \tilde{L}(\theta)$ numerically using the L-BFGS Algorithm via the {\tt optimize} function in the Julia package {\tt Optim}.

\bibliographystyle{apalike}
\bibliography{ref}

\end{document}